\newtheorem{theorem}{Theorem}
\renewcommand\nomgroup[1]{%
  \item[\bfseries
  \ifstrequal{#1}{A}{Parameters}{%
  \ifstrequal{#1}{B}{Functions}{%
  \ifstrequal{#1}{C}{Decision Variables}{%
   }}}%
  ]}
\begin{document}
\title{Efficient Resource Allocation under Adversary Attacks: A Decomposition-Based Approach}
\author{Mansoor~Davoodi~and~Setareh~Maghsudi
\thanks{Faculty of Electrical Engineering and Information Technology,
  Ruhr-University Bochum, 44801 Bochum, Germany. 
  \{Mansoor.DavoodiMonfared, Setareh.Maghsudi\}@ruhr-uni-bochum.de}
}
\maketitle
\begin{abstract}
We address the problem of allocating limited resources in a network under persistent yet statistically unknown adversarial attacks. Each node in the network may be degraded, but not fully disabled, depending on its available defensive resources. The objective is twofold: to minimize total system damage and to reduce cumulative resource allocation and transfer costs over time.
We model this challenge as a bi-objective optimization problem and propose a decomposition-based solution that integrates chance-constrained programming with network flow optimization. The framework separates the problem into two interrelated subproblems: determining optimal node-level allocations across time slots, and computing efficient inter-node resource transfers. We theoretically prove the convergence of our method to the optimal solution that would be obtained with full statistical knowledge of the adversary.
Extensive simulations demonstrate that our method efficiently learns the adversarial patterns and achieves substantial gains in minimizing both damage and operational costs, comparing three benchmark strategies under various parameter settings.
\end{abstract}

\textbf{Keywords:}
Resource allocation, Adversary, Decomposition, Bi-objective optimization, Chance-constrained optimization, Network flow.

\section{Introduction}

Modern distributed systems—spanning cloud infrastructures, communication networks, and cyber-physical systems—must remain operational under persistent threats. One critical challenge is the efficient allocation of limited defensive or operational resources across a network, especially when the system is exposed to adversarial attacks that degrade node functionality without full observability or predictability. We focus on the problem of Resource Allocation in the Presence of Adversaries (RAPA), where an adversary targets nodes in a network with unknown but stationary probabilistic behavior.

The fundamental difficulty in RAPA lies in striking a balance between two conflicting objectives: (\textit{i}) minimizing the total system damage caused by adversarial attacks, and (\textit{ii}) minimizing the operational costs of resource allocation and transfer. These goals must be met under uncertainty about the adversary's strategy—specifically, the statistical distribution of attack intensities across nodes is unknown, including both mean and variance. This makes the problem substantially more complex than deterministic or fully informed resource allocation problems commonly studied in the literature.

Existing studies in resource allocation under adversarial settings often make strong assumptions such as full observability of attack statistics or adversary models that are either fully deterministic or follow known stochastic distributions. In contrast, our work targets a more realistic and challenging setting where only the outcomes of attacks are observable over time, and their underlying statistical properties must be inferred indirectly during operation.

\textbf{Contributions.}
This paper introduces a novel bi-objective optimization framework for RAPA, characterized by the following contributions:

\begin{itemize}
    \item We formulate RAPA as a bi-objective stochastic optimization problem aimed at minimizing cumulative damage and resource allocation/transfer costs across multiple time slots. 
    \item To address this problem efficiently, we develop an iterative decomposition approach that splits the problem into two coupled subproblems:
    \begin{enumerate}
        \item The first subproblem determines node-level resource allocations. It is reformulated as a chance-constrained optimization problem to handle uncertainty about attack distributions.
        \item The second subproblem focuses on minimizing inter-node resource transfer costs using network flow optimization.
    \end{enumerate}
    This decomposition enables scalable and tractable solutions while maintaining interaction between the subproblems, which we justify both algorithmically and theoretically.
    \item Unlike most prior work, our method does not require prior knowledge of attack means or variances. Instead, it learns adversarial behavior iteratively while still making effective allocation decisions at each time step.
    \item We prove that our method converges to the optimal solution that would have been obtained if the attack distributions were fully known.
    \item We conduct extensive simulation experiments comparing our approach to three baselines:
    \begin{itemize}
        \item A statistical benchmark with known mean and variance of attacks,
        \item A greedy heuristic based on node importance and attack frequency,
        \item An oracle with full knowledge of the attack distribution for all nodes and time slots.
    \end{itemize}
    Our method significantly outperforms greedy approach and approaches the benchmark with known statistics, demonstrating its ability to learn and adapt effectively in uncertain adversarial settings.
\end{itemize}

To the best of our knowledge, this is the first study to address the adversarial resource allocation problem using a joint linear-quadratic optimization framework under fully unknown attack statistics. By integrating chance-constrained modeling, adaptive decision-making, and network optimization, we contribute a rigorous yet practical solution to a class of problems critical for secure and resilient systems. The framework is broadly applicable and scalable, opening pathways for future research in robust distributed control, cyber-defense, and intelligent network management under uncertainty.

\section{Related Work}

Efficient resource allocation under adversarial conditions has been studied across various domains, including cybersecurity, wireless communication, cloud computing, and edge systems. Broadly, existing methods fall into three primary categories: game-theoretic frameworks, learning-based approaches, and optimization-based techniques. Each category offers distinct advantages and limitations depending on the availability of information about the adversary and the decision-maker’s objectives.

{Game-Theoretic Approaches.} 
Game-theoretic models have been widely applied to adversarial decision-making problems, especially when strategic interactions between a defender and an attacker are explicit. Representative works include \cite{yang2013improving, chang2021resource, wu2022dynamic}, which model resource allocation as Stackelberg or Bayesian games. These approaches typically assume rational adversaries and known payoff structures, which enable equilibrium-based strategies. However, these assumptions are often too strong in dynamic or data-sparse environments. Game-theoretic models also generally require complete or partial knowledge of the adversary’s utility functions or action space, which is not available in the problem setting we address.

{Learning-Based Approaches.}
Learning-based solutions, particularly those based on online learning and reinforcement learning, aim to adapt to adversarial environments by using historical data. For instance, \cite{du2022adversarial} and \cite{alqerm2020enhanced} utilize deep learning and adaptive heuristics to infer attack patterns and adjust resource allocations accordingly. Cai et al. \cite{cai2023deep} apply deep reinforcement learning to dynamic resource environments. While these methods are powerful when large-scale data is available, they may lack performance guarantees and often require extensive training, which is costly in high-risk or real-time systems. Furthermore, many learning-based methods do not explicitly handle resource transfer costs or offer convergence guarantees.

{Optimization-Based Approaches.}
Optimization techniques offer a more principled framework for resource allocation with explicit objective functions and constraints. In adversarial contexts, most works rely on stochastic, robust, or heuristic optimization models. For instance, Lin et al. \cite{lin2022heuristic} address edge computing resource allocation using a mixed-integer nonlinear program, offering a heuristic for energy-efficient offloading. Lian et al. \cite{lian2023distributed} propose distributed continuous-time algorithms for adversarially affected resource networks with projection-based convergence. In \cite{melouk2014stochastic}, discrete simulation optimization is used to address stochastic settings, supplemented by Tabu search and greedy heuristics. Li et al. \cite{li2021solving} model bi-objective stochastic problems but rely on known distributions and evolutionary solvers without convergence proofs. Plamondon et al. \cite{plamondon2006efficient} develop near-optimal real-time allocations using Markov Decision Processes with task dependencies. Recently, \cite{balseiro2023online} introduced online stochastic resource allocation under dynamic requests using dual mirror descent, offering regret bounds. Yaylali and Kalogerias \cite{yaylali2023robust} formulate a risk-aware allocation using Conditional Value-at-Risk to quantify robustness. Lotfi et al. \cite{lotfi2012centralized} transform stochastic centralized resource allocation into deterministic equivalents.

Despite the progress in this area, existing optimization-based approaches often suffer from one or more of the following limitations:
\begin{itemize}
    \item They assume known statistical distributions of attacks (e.g., known means or variances), which is rarely feasible in adversarial settings.
    \item They typically address either single-objective optimization (e.g., minimizing damage only) or use scalarization to combine objectives without addressing their inherent trade-offs.
    \item Many require integer programming or simulation-based heuristics that do not scale well and offer no theoretical guarantees.
\end{itemize}

In contrast, our work addresses a more general and practically relevant setting where the adversary’s attack probabilities are unknown and must be inferred implicitly through system feedback. We introduce a bi-objective optimization model that explicitly considers both cumulative damage and inter-node resource transfer costs. Our approach combines chance-constrained programming for uncertainty management with a network flow model for cost-efficient resource movement. Importantly, we provide a theoretical proof that our solution converges to the optimal solution corresponding to the true (unknown) attack means. This level of rigor and generality—handling unknown distributions, bi-objective structure, and scalable convergence—is not found in the existing literature.

\section{Problem Statement}

We consider a networked system of \(n\) computational nodes over a finite horizon of \(T\) discrete time slots, \(t = 1,2,\dots,T\). At each time \(t\), a total resource budget \(R\) must be allocated among the nodes and can be re–distributed via transfers. Meanwhile, an adversary may attack each node \(i\) with some unknown but stationary probability \(p_i = \Pr(\tilde y_i^t = 1)\). The goal of the problem of \emph{Resource Allocation in the Presence of Adversaries} (RAPA) is to design a dynamic allocation and transfer policy that first minimizes the cumulative damage expected caused by attacks and, secondly, minimizes the total cost of transferring resources between nodes.  

\subsection*{Notation}
\begin{table*}[!t]
\centering
\caption{Notation}
\label{tab:notation}
\footnotesize
\begin{tabular}{ll}
\toprule
\(\,n\) & Number of nodes in the network.\\
\(\,T\) & Number of time slots.\\
\(\,R\) & Total resource budget per time slot.\\
\(\,r_i^0\) & Initial resource at node \(i\).\\
\(\,r_i^t\) & Resource allocated to node \(i\) at time \(t\).\\
\(\,x_{ij}^t\) & Resource transferred from node \(i\) to node \(j\) at time \(t\).\\
\(\,w_i\) & Importance weight of node \(i\) (e.g., value or criticality).\\
\(\,c_{ij}\) & Cost per unit to transfer resource from \(i\) to \(j\).\\
\(\,\tilde y_i^t\in\{0,1\}\) & Random attack indicator on node \(i\) at time \(t\).\\
\(\,p_i\) & Unknown attack probability for node \(i\), \(p_i = \Pr(\tilde y_i^t = 1)\).\\
\(\,r_i^{\min},\,r_i^{\max}\) & Lower/upper thresholds for allocation to node \(i\).\\
\(\,D_i(r_i^t,\tilde y_i^t)\) & Damage function for node \(i\) at time \(t\).\\
\(\,Z_1,\,Z_2\) & Cumulative damage and total transfer cost objectives.\\
\bottomrule
\end{tabular}
\end{table*}

\subsection*{Decision Variables and Constraints}

At each time \(t\), nodes start with the allocation from the previous slot, re–distribute via \(x_{ij}^t\), and end with \(r_i^t\):
\[
r_i^t = r_i^{t-1} + \sum_{j=1}^n x_{ji}^t - \sum_{j=1}^n x_{ij}^t,
\quad\forall i,\,t\ge1.
\]
We enforce:
\begin{align}
&\sum_{i=1}^n r_i^t \le R, 
&& r_i^{\min} \le r_i^t \le r_i^{\max}, 
&& x_{ij}^t \ge 0,
\label{eq:budget-bounds}
\end{align}
for all \(i,j\in\{1,\dots,n\}\) and \(t=1,\dots,T\).  Here, \(r_i^{\min}>0\) ensures minimal operability, and \(r_i^{\max}\) is the point beyond which additional resource yields no further damage reduction.

\subsection*{Damage Function}
To capture the cost of under‐allocation in a way that is both intuitive and solver‐friendly, we adopt the following three‐segment piecewise‐linear damage function
\[
D_i:\mathbb{R}_{\ge0}\times\{0,1\}\;\longrightarrow\;[0,1]
\]
defined by
\begin{equation}
\label{eq:damage}
D_i(r_i^t,\tilde y_i^t)
=
\tilde y_i^t
\begin{cases}
1, 
& r_i^t \le r_i^{\min},\\[6pt]
\dfrac{r_i^{\max} - r_i^t}{r_i^{\max} - r_i^{\min}},
& r_i^{\min} < r_i^t < r_i^{\max},\\[8pt]
0, 
& r_i^t \ge r_i^{\max}.
\end{cases}
\end{equation}

\paragraph{Rationale and Precedent}  
This form:
\begin{itemize}
  \item Caps damage at 1 in the worst‐case shortfall (\(r_i^t\le r_i^{\min}\)),  
  \item Decays linearly as allocation rises toward the “safe” threshold \(r_i^{\max}\), and  
  \item Becomes zero once allocation suffices to cover the nominal demand.  
\end{itemize}
It appears repeatedly in facility location and network design under uncertainty.  For example, Snyder et al. reviewed comprehensively facility‐location under uncertainty and discussed this exact penalty for unmet demand \cite{snyder2006facility, snyder2005reliability}.
More generally, robust-optimization texts highlight that a three-segment piecewise-linear loss like balances modeling fidelity with tractability in MIP and QP solvers \cite{bertsimas2011theory,ShapiroDentchevaRuszczynski2009}.

\paragraph{Computational Tractability}  
Because there are only three linear pieces, modern solvers introduce just a few extra variables and constraints, yielding tight formulations and fast solve times.

\paragraph{Expectation under Uncertainty}  
If \(\tilde y_i^t\sim\text{Bernoulli}(p_i)\), then
\[
\mathbb{E}\bigl[D_i(r_i^t,\tilde y_i^t)\bigr]
= 
\begin{cases}
p_i , & r_i^t \le r_i^{\min},\\
p_i \times \dfrac{r_i^{\max} - r_i^t}{r_i^{\max} - r_i^{\min}}, & r_i^{\min} < r_i^t < r_i^{\max},\\
0,& r_i^t \ge r_i^{\max}.
\end{cases}
\]
This closed‐form makes integration into risk‐averse objectives straightforward.

\subsection*{Bi-Objective Formulation}

We seek to minimize the following.
\begin{align}
Z_1 &= \sum_{t=1}^T \sum_{i=1}^n w_i \,\mathbb{E}\bigl[D_i(r_i^t,\tilde y_i^t)\bigr],
\label{eq:obj-damage}\\
Z_2 &= \sum_{t=1}^T \sum_{i=1}^n \sum_{j=1}^n c_{ij}\,x_{ij}^t.
\label{eq:obj-cost}
\end{align}

Subject to the feasibility constraints in \eqref{eq:budget-bounds} and the flow–balance constraint above.

\subsection*{Complexity and Solution Approach}

RAPA has \(O(n^2T)\) decision variables and \(O(nT)\) constraints, with unknown probabilities \(p_i\).  Enumerating the Pareto front is intractable—even under full information—due to potentially infinite or exponentially many trade-offs.  Instead, we:

\begin{enumerate}
  \item Reformulate the damage objective via a chance–constraint surrogate (see the next Section).
  \item Decompose into two interlinked single–objective subproblems.
  \item Introduce a tunable risk parameter \(\alpha\) to trace efficient solutions without full Pareto enumeration.
\end{enumerate}


\section{Decomposition Approach}

In RAPA, the damage‐minimization objective \(Z_{1}\) depends on the random attack indicators \(\tilde y_i^t\), whereas the transfer‐cost objective \(Z_{2}\) depends only on the transfer variables \(x_{ij}^t\).  We exploit this structure by decomposing the bi‐objective problem into two coupled subproblems solved sequentially at each time \(t\).  This decomposition both simplifies computation and preserves the interaction between allocation and transfer decisions.

\subsection*{Overview of the Two‐Phase Method}

At each time slot \(t\), given estimates of the attack distribution parameters \(\mathbb{E}[\tilde y_i^t]\) and \(\mathrm{Var}[\tilde y_i^t]\):
\begin{enumerate}
  \item[(I)]  \textbf{Allocation Phase:} solve a chance‐constrained program to choose \(r_i^t\) so that with confidence \(1-\alpha\) the expected damage does not exceed a threshold \(\epsilon^t\).  This yields \(\{r_i^t\}^*\).
  \item[(II)] \textbf{Transfer Phase:} solve a minimum‐cost network‐flow problem to find \(x_{ij}^t\) that implements the allocation \(\{r_i^t\}^*\) from the previous slot’s ending inventories.
\end{enumerate}

The two phases are linked because the transfer phase must deliver the allocations computed in the allocation phase, and the observed attacks after transfer feed back to update the attack‐distribution estimates for the next slot.

\subsection*{Phase I: Chance‐Constrained Allocation}

We introduce an auxiliary variable \(\epsilon^t\) that upper‐bounds the total weighted damage in slot \(t\).  The chance‐constraint

\[
\Pr\Bigl(\sum_{i=1}^n w_i\,D_i(r_i^t,\tilde y_i^t)\;\le\;\epsilon^t\Bigr)
\;\ge\;1-\alpha ,
\]

where $\Pr(.)$ is the probability with respect to the random attack indicators \(\tilde y_i^t\). This constraint ensures that, with probability at least \(1-\alpha\), damage does not exceed \(\epsilon^t\).  Recalling
\[
D_i(r_i^t,\tilde y_i^t)
=\tilde y_i^t\,
\frac{r_i^{\max}-r_i^t}{r_i^{\max}-r_i^{\min}},
\]
and defining 
\[
v_i^t \;=\; w_i\,\frac{r_i^{\max}-r_i^t}{r_i^{\max}-r_i^{\min}},
\]
a standard Cantelli‐type bound \cite{nikolova2010approximation,yang2017algorithm} yields the deterministic surrogate

\begin{equation}
\sum_{i=1}^n \mathbb{E}[\tilde y_i^t]\,v_i^t
\;+\;\sqrt{\tfrac{1-\alpha}{\alpha}}\,
\sqrt{\sum_{i=1}^n \mathrm{Var}[\tilde y_i^t]\,v_i^t}
\;\le\;\epsilon^t.
\label{eq:deterministic_cc}
\end{equation}

Hence, \textbf{Subproblem I} at time \(t\) is
\begin{align}
\min_{r_i^t,\;\epsilon^t}\quad & \epsilon^t \\
\text{s.t.}\quad
& \eqref{eq:deterministic_cc}, \nonumber\\
& \sum_{i=1}^n r_i^t \;\le\; R, \quad
r_i^{\min}\;\le\;r_i^t\;\le\;r_i^{\max},\;\forall i.
\nonumber
\end{align}

\textbf{Remark 1.} The risk parameter \(\alpha \in [0,1]\) in the chance‐constraint (Phase I) acts as an explicit trade‐off knob between damage minimization (\(Z_1\)) and transfer‐cost minimization (\(Z_2\)).  
\begin{itemize}
  \item When \(\alpha\to 0\), the constraint in \eqref{eq:deterministic_cc} becomes more conservative, forcing the allocation to guard against high‐damage outcomes almost surely; this yields a risk‐averse solution that may incur higher transfer costs.  
  \item When \(\alpha\to 1\), the chance‐constraint is relaxed, allowing higher damage risk in exchange for lower expected transfer cost—i.e., a more risk‐seeking strategy.  
  \item By continuously varying \(\alpha\) between 0 and 1, one traces out a continuum of “practical” Pareto‐optimal solutions reflecting different risk–cost trade‐offs.  
\end{itemize}
Equivalently, one can view \(\alpha\) as defining the weight in a scalarized objective
\[
\min\;\lambda(\alpha)\,Z_1 \;+\;\bigl(1-\lambda(\alpha)\bigr)\,Z_2,
\]
where \(\lambda(\alpha)\) is a monotonically decreasing function of \(\alpha\).  This interpretation clarifies that tuning \(\alpha\) is akin to selecting a point on the Pareto front under uncertainty, balancing robustness (low damage risk) against efficiency (low transfer cost).

\subsection*{Phase II: Minimum‐Cost Network Flow}

Let \(\hat r_i^{t-1}\) be the post‐transfer inventory of node \(i\) at the end of slot \(t-1\).  Having computed \(r_i^t{}^*\) in Phase~I, we must transfer resource so that each node achieves \(r_i^t{}^*\).  Define supply/demand at node \(i\):
\[
s_i \;=\;\hat r_i^{t-1} - r_i^t{}^*\quad
\begin{cases}
>0 &: \text{surplus at }i,\\
<0 &: \text{deficit at }i.
\end{cases}
\]
Introduce a directed graph with nodes \(1,\dots,n\) plus super‐source and super‐sink.  Connect source to each surplus node \(i\) with capacity \(s_i\), cost \(0\); connect each deficit node \(j\) to sink with capacity \(-s_j\), cost \(0\); and for each \(i,j\) include edge \((i,j)\) with capacity \(R\) and cost \(c_{ij}\).  

\textbf{Subproblem II} is the standard minimum‐cost flow:

\begin{align}
\min_{x_{ij}^t}\quad & \sum_{i,j} c_{ij}\,x_{ij}^t \\
\text{s.t.}\quad
& \sum_j x_{ij}^t - \sum_j x_{ji}^t = s_i,\;\forall i, \\
& 0\;\le\;x_{ij}^t\;\le\;R,\;\forall i,j.
\nonumber
\end{align}

This LP runs in polynomial time (e.g.\ via successive shortest‐path or network simplex).

\begin{algorithm}[t]
\caption{Two‐Phase RAPA Solver}
\label{alg:arap}
\begin{algorithmic}[1]
\STATE Initialize \(\mathbb{E}[\tilde y_i^1]\), \(\mathrm{Var}[\tilde y_i^1]\), and \(r_i^0\).
\FOR{\(t=1\) to \(T\)}
  \STATE \emph{(Phase I)} Solve Subproblem I \(\to\{r_i^t\}^*, \epsilon^t\).
  \STATE \emph{(Phase II)} Solve Subproblem II \(\to\{x_{ij}^t\}^*\).
  \STATE Apply transfers, observe attacks \(\tilde y_i^t\).
  \STATE Update \(\mathbb{E}[\tilde y_i^{t+1}], \mathrm{Var}[\tilde y_i^{t+1}]\) (e.g.\ Bayesian or sample averages).
\ENDFOR
\STATE \textbf{return} \(\{r_i^t\}^*,\{x_{ij}^t\}^*\).
\end{algorithmic}
\end{algorithm}

The details of the proposed approach to solve RAPA is presented in the Appendix.

 \subsection{Convergence, Complexity and Optimality Analysis}
\begin{theorem}
    For a specific risk parameter $\alpha$, as the number of iterations $t$ increases, the solution obtained by the RAPA algorithm tends to the true optimal solution of the resource allocation problem in time slot $t$.
\end{theorem}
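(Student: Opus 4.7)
The plan is to decompose the convergence claim into two independent pieces: a statistical convergence of the moment estimates, and a stability/continuity property of the two parametric optimization problems that constitute one round of the algorithm.

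First, I would argue the statistical piece. Because the attack indicators $\tilde y_i^t$ are i.i.d.\ Bernoulli$(p_i)$, the sample-average and sample-variance estimators maintained in line~6 of Algorithm~\ref{alg:arap} satisfy, by the strong law of large numbers,
\[
\hat{\mathbb{E}}_t[\tilde y_i]\;\xrightarrow{\text{a.s.}}\;p_i,
\qquad
\widehat{\mathrm{Var}}_t[\tilde y_i]\;\xrightarrow{\text{a.s.}}\;p_i(1-p_i),
\]
uniformly in $i$ since $n$ is finite. This is the only place randomness enters the argument; from here on the analysis is deterministic in the realized moment sequence.

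Second, I would treat Subproblem~I as a parametric convex program in which the data are the moment vectors $(\hat{\mathbb{E}}_t[\tilde y],\widehat{\mathrm{Var}}_t[\tilde y])$. For fixed $\alpha\in(0,1)$, the Cantelli surrogate \eqref{eq:deterministic_cc} is jointly continuous in these moments and in $(r_i^t,\epsilon^t)$ (the square-root is a composition of continuous maps, and $v_i^t$ is linear in $r_i^t$). The remaining constraints form a compact polytope that is independent of the estimates. Assuming a mild Slater condition (an interior allocation exists under the true moments, which then transfers to all sufficiently large $t$), Berge's maximum theorem yields continuity of the optimal value $\epsilon^{t\ast}$ and upper hemicontinuity of the argmin correspondence. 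Combined with strict convexity induced by the square-root penalty (or, failing that, a lexicographic tie-breaking rule), this upgrades to convergence of the optimizer: $r_i^{t\ast}\bigl(\hat{\mathbb{E}}_t,\widehat{\mathrm{Var}}_t\bigr)\to r_i^{t\ast}\bigl(p,p(1-p)\bigr)$ almost surely.

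Third, for Phase~II I would observe that it is a minimum-cost flow LP whose data are the supplies $s_i=\hat r_i^{t-1}-r_i^{t\ast}$ and the fixed costs $c_{ij}$. Optimal values of LPs are continuous in the right-hand side wherever the problem is feasible, and the optimal-basis mapping is piecewise constant, so the transfer solution $\{x_{ij}^{t\ast}\}$ inherits convergence from the convergence of $\{r_i^{t\ast}\}$ established in the previous step. Chaining the three steps gives the theorem: as $t\to\infty$, the iterate $(\{r_i^t\}^\ast,\{x_{ij}^t\}^\ast)$ produced by RAPA converges almost surely to the solution that Algorithm~\ref{alg:arap} would compute with knowledge of the true moments $(p_i,p_i(1-p_i))$, i.e.\ to the optimal solution under full statistical information at the chosen risk level $\alpha$.

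The main obstacle I anticipate lies in the second step, specifically in passing from continuity of the optimal \emph{value} to convergence of the optimal \emph{allocation}. If the square-root term fails to enforce strict convexity on a flat portion of the damage surrogate, the argmin set can be multivalued and one only gets set-convergence in the Kuratowski sense; I would need to either invoke a tie-breaking rule, or restate the theorem in terms of distance to the optimal set. A secondary technical point is verifying a constraint qualification uniformly along the sample path, which needs a mild nondegeneracy hypothesis on $R$, $r_i^{\min}$ and $r_i^{\max}$ so that Slater's condition for the true distribution survives perturbation of the moments for all sufficiently large $t$.
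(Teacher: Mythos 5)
Your proposal follows the same skeleton as the paper's proof---invoke the strong law of large numbers to send the estimated mean and variance to their true values, then conclude that the chance-constrained program converges to the one a fully informed solver would face---but you go considerably further than the paper does on the second half of the argument. The paper's proof stops at the observation that the left-hand side of \eqref{eq:deterministic_cc} converges pointwise to its true-moment counterpart and then asserts, without justification, that the optimal solution therefore converges; it never addresses why convergence of the problem \emph{data} implies convergence of the \emph{optimizer}. That is precisely the gap you fill with Berge's maximum theorem, the Slater-type constraint qualification, and the discussion of when upper hemicontinuity of the argmin correspondence can be upgraded to convergence of a selection. Your honesty about the remaining obstacle is also well placed: the surrogate constraint in \eqref{eq:deterministic_cc} contains $v_i^t$ linearly inside the square root, so the induced penalty is concave rather than strictly convex in the allocation, the argmin can genuinely be a set, and the clean statement one can actually prove is convergence of the optimal value $\epsilon^{t\ast}$ together with set-convergence (or convergence of the distance to the optimal set) of the allocations---which is arguably the correct formalization of the theorem as stated. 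Your treatment of Phase~II via LP right-hand-side sensitivity is likewise absent from the paper but needed if the claim is to cover the transfer variables $x_{ij}^t$ as well as the allocations. In short, your route proves a sharper and more defensible version of the statement; the paper's own proof establishes only the convergence of the constraint function and leaves the stability step implicit.
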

\begin{proof}
The attack distribution is unknown but remains unchanged over time. Suppose $\mu$ and $\sigma^2$ are the true mean and true variance of the attack distribution. We can write
\begin{align*}
    & \quad \mathbb{E}[\tilde{y}_i^{t}] = \mu+ \Delta_{\text{mean}}^t,\\
    & \quad \mathbb{V}[\tilde{y}_i^{t}] = \sigma^2 + \Delta_{\text{Var}}^t,
\end{align*}
where $\Delta_{\text{mean}}^t$ and $\Delta_{\text{Var}}^t$ are the estimation mean error and variance error in time slot $t$. All the constraints and the objective in the model are deterministic except for the probability of attacks. Substituting the estimated error in this constraint yields
\begin{align*}
\sum_{i=1}^{n} (\mu + \Delta_{\text{mean}}^t ) v_i^t + \sqrt{\frac{1-\alpha}{\alpha}} \sqrt{\sum_{i=1}^{n} (\sigma^2 + \Delta_{\text{Var}}^t) v_i^t} \leq \epsilon^t.
\end{align*}
By applying the strong law of large numbers (Kolmogorov's law) \cite{chung2008strong}, we know that $\Delta_{\text{mean}}^t \to 0$ and $\Delta_{\text{Var}}^t \to 0$ as $T \to \infty$. That is, the estimation error in both the mean and variance decreases over iterations. Formally,
\begin{align*}
\lim_{t \to \infty} \sum_{i=1}^{n} \mathbb{E}[\tilde{y}_i^{t}] v_i^t + \sqrt{\frac{1-\alpha}{\alpha}} \sqrt{\sum_{i=1}^{n} \mathbb{V}[\tilde{y}_i^{t}] v_i^t} = \\ \sum_{i=1}^{n} \mu v_i^t + \sqrt{\frac{1-\alpha}{\alpha}} \sqrt{\sum_{i=1}^{n} \sigma^2 v_i^t}.
\end{align*}
Thus, the model and the constraints approach a deterministic form devoid of error or uncertainty. Consequently, the solution derived from the RAPA algorithm converges towards the true optimal solution.
\end{proof}

Since by increasing $t$ the obtained solution by RAPA approaches the optimal solution in time slot $t$, the sum of objective values, $\Sigma_{t=1}^T \epsilon^t$, approaches the optimal $Z_1$ value.

The proposed decomposition approach is an iterative two-step algorithm. In the first step, given \(\alpha\), the algorithm minimizes the damage function. In the second step, it computes the optimal resource transfer. Since there is no dependency or limit between allocated resources over time and nodes, i.e., the variables \(r_i^t\) and \(r_j^{t+1}\) for any \(i, j\), and \(t\), minimizing the damage independently in each time slot minimizes the overall damage. As a result, the obtained solution is non-dominated, meaning no other solution achieves a lower level of damage. Therefore, it is the extreme Pareto-optimal solution with the minimum damage. 

The algorithm solves a \textit{quadratic model} to find the optimal resource allocation in the first step and a \textit{linear model} to find the optimal resource transfer in the second step. We applied Gurobi version 11.0.1 \cite{gurobi2024} and SciPy version 1.13.0 \cite{SciPy-NMeth2020} in the first and the second steps, respectively. Gurobi employs advanced algorithms such as the primal-dual interior-point method \cite{nesterov1994interior} to solve quadratic programming. The theoretical complexity for the interior-point method is polynomial, specifically \(O(n^{3.5}L)\), where \(n\) is the number of variables and \(L\) is the input size in bits. Empirically, Gurobi demonstrates robust performance, efficiently solving large instances, indicating effective optimization beyond theoretical worst-case bounds. The network flow problem is solved using the SciPy library, which employs various efficient algorithms such as Ford-Fulkerson (Edmonds-Karp) \cite{edmonds1972theoretical}, Dinic's \cite{dinic1970algorithm}, and Push-Relabel \cite{goldberg1988new}. Theoretical complexities for these algorithms are \(O(n^5)\), \(O(n^4)\), and \(O(n^3)\) respectively. However, SciPy indicates its implementation is highly optimized and performs well in practice. This suggests that while theoretical worst-case complexities provide an upper bound, the actual performance of SciPy is efficient and suitable for RAPA.

\section{Simulation Results}
We conduct comprehensive simulations to evaluate our proposed approach, called \textbf{Un-mean}, to solve RAPA across various scenarios. We also compare it with the following optimization approaches:
\begin{itemize}
  \item \textbf{Kn-mean}: A variant of our method that assumes \emph{a priori} knowledge of the true mean and variance of node–attack probabilities.  By removing the online update step, it solves the chance‐constrained allocation with exact statistics.  Comparing against Kn-mean quantifies how quickly Un-mean learns and approaches the performance of a fully informed method.
  \item \textbf{Greedy}: A lightweight heuristic that first guarantees each node’s minimum resource \(r_i^{\min}\), then distributes the remaining budget proportionally according to a combined score of node importance and empirical attack frequency.  This baseline evaluates the benefit of our principled chance‐constraint approach versus a simple, data‐driven rule.
  \item \textbf{Oracle}: An “omniscient” benchmark that knows each node’s actual attack outcome in the upcoming time slot.  It solves a deterministic LP each round to minimize damage, without regard for statistical uncertainty.  Oracle defines the lower bound on damage and illustrates the cost of uncertainty in transfer-phase decisions.
\end{itemize}

All three baselines go beyond a naive greedy rule—Kn-mean uses full statistical knowledge in the same chance‐constrained model, and Oracle solves a dynamic LP with perfect foresight—providing a spectrum from heuristic to fully informed strategies.

We implemented these approaches in Python, using Gurobi \cite{gurobi2024} for the chance-constrained and LP allocations, and SciPy \cite{SciPy-NMeth2020} for the network flow. Experiments ran on an Intel i7-11800H CPU with 16 GB RAM.

We generate each RAPA instance by creating a network of \(n\) nodes with randomly assigned importance weights (\(w_i\)) and transfer costs (\(c_{ij}\)). Additionally, we randomly generate damage functions \(D\) with parameters \(r_i^{\text{min}}\) and \(r_i^{\text{max}}\) for \(i = 1,2,\ldots,n\). We also determine \(T\) and a random total available resource $R$ between $\sum_{i} r_i^{\text{min}}$ and $\sum_{i} r_i^{\text{max}}$. Subsequently, we apply a Bernoulli distribution with a random attack probability \(p\) to produce \(T\) attack possibilities for each node. 
\subsection{Method Comparison}
For comparison, we selected four pairs of \((n,T)\): \((30,20)\), \((30,80)\), \((100,20)\), and \((100,80)\) to plot the result. We executed each algorithm on 50 different random scenarios, varying the importance weights, transfer costs, resource availability, damage functions \(D\), and attack probabilities. Figure \ref{fig_comparison} shows the results. The left-hand panels display the obtained average damage for each approach, while the right-hand panels show the resource transfer cost. \\
As expected, the Oracle consistently achieves the minimum damage values because it has prior knowledge of the attacks. Consequently, Oracle optimally allocates resources \(R\) to only the nodes that will be attacked in the next time slot. However, this strategy necessitates transferring substantial resources in the network. As a result, Oracle has the highest transfer cost compared to the other three approaches. In contrast, the Kn-mean approach distributes resources based on known mean and variance. This results in the minimum transfer cost compared to the other approaches. The Greedy approach performs the worst in terms of damage but achieves a better transfer cost than Oracle. The Un-mean approach outperforms Greedy in terms of damage but falls short of Kn-mean and Oracle due to the information shortage, while achieving better transfer costs than Oracle and Greedy. As the number of time slots increases from 20 to 80, Un-mean converges towards Kn-mean as a result of learning, so that the adverse effect of the information absence vanishes. 

Concerning the running time, the Greedy approach is the fastest, followed by Oracle, Kn-mean, and Un-mean. For instance, in the \((100,80)\) scenario, Greedy takes $2.3$ seconds, Oracle $4.4$, Kn-mean $5.6$, and Un-mean $5.8$. This ranking is due to the computational complexity of each method: All methods execute two phases in each round: Finding a resource allocation and computing the optimal resource transfer. The second phase is identical across all approaches, involving the execution of the described network flow method, which is a linear program with $(O(n^2))$ variables to compute $(x_{ij}^t)^*$. Thus, the main distinction lies in the first phase:  In each time slot $t$, both the Kn-mean and Un-mean solve a quadratic program with $O(n)$ variables to determine $(r_i^t)^*$ within the chance-constrained model. Besides, Un-mean requires updating the mean and variance of attacks. Since the Oracle knows the attack probabilities, it solves a deterministic linear program with $O(n)$ variables to find the optimal resource allocations $(r_i^t)^*$ in the first phase. Finally, the Greedy approach allocates resources using a simple proportional resource share based on the weights of importance and frequency of attacks. So, it does not need to solve a linear program.
\begin{figure*}[!t]
    \centering
    \begin{subfigure}{0.24\textwidth}
        \centering
        \includegraphics[width=\linewidth]{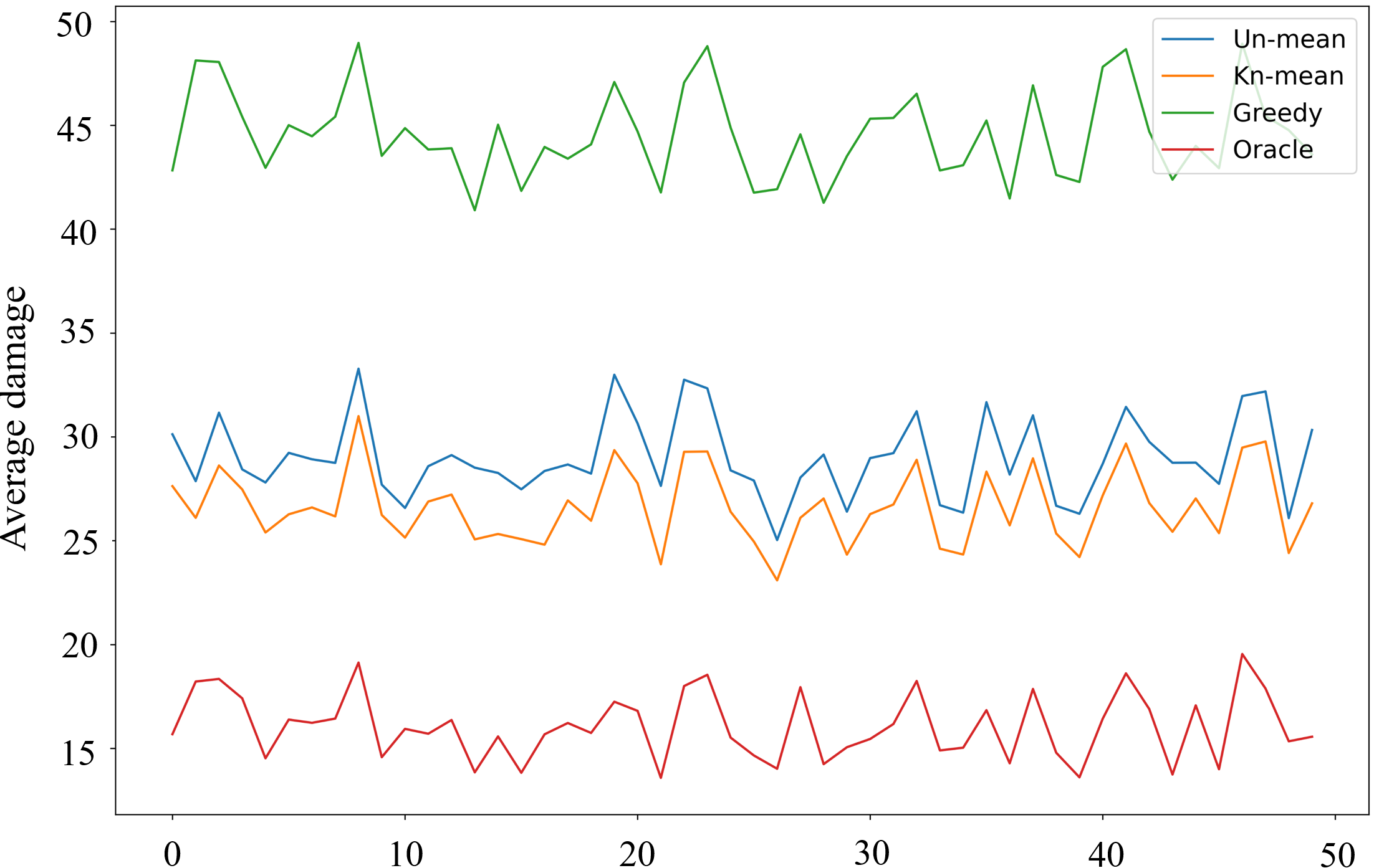}
        \caption*{Damage values for ($30, 20$)}
    \end{subfigure}
    \hfill
    \begin{subfigure}{0.24\textwidth}
        \centering
        \includegraphics[width=\linewidth]{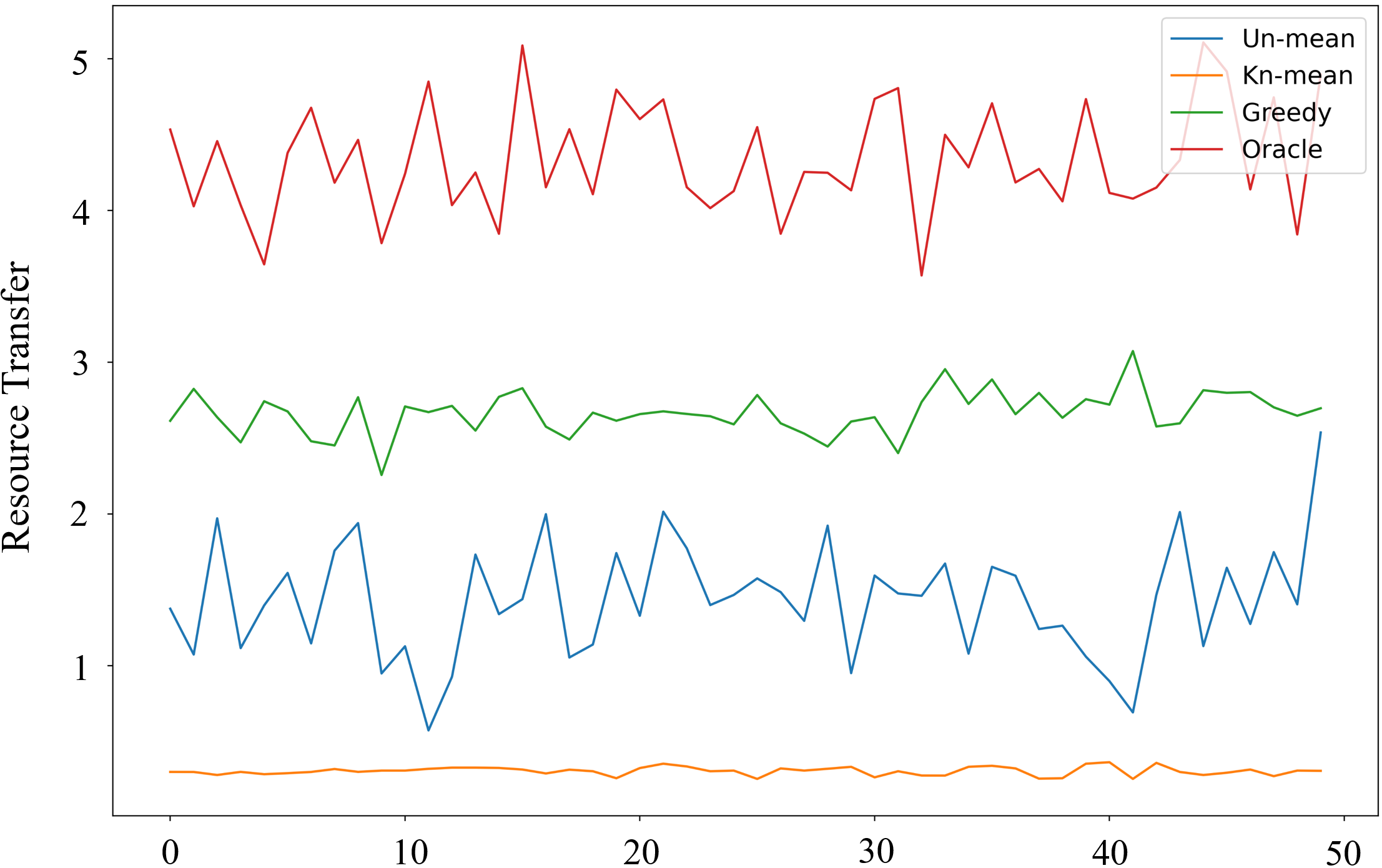}
        \caption*{Transfer cost for ($30, 20$)}
    \end{subfigure}
    \hfill
    \begin{subfigure}{0.24\textwidth}
        \centering
        \includegraphics[width=\linewidth]{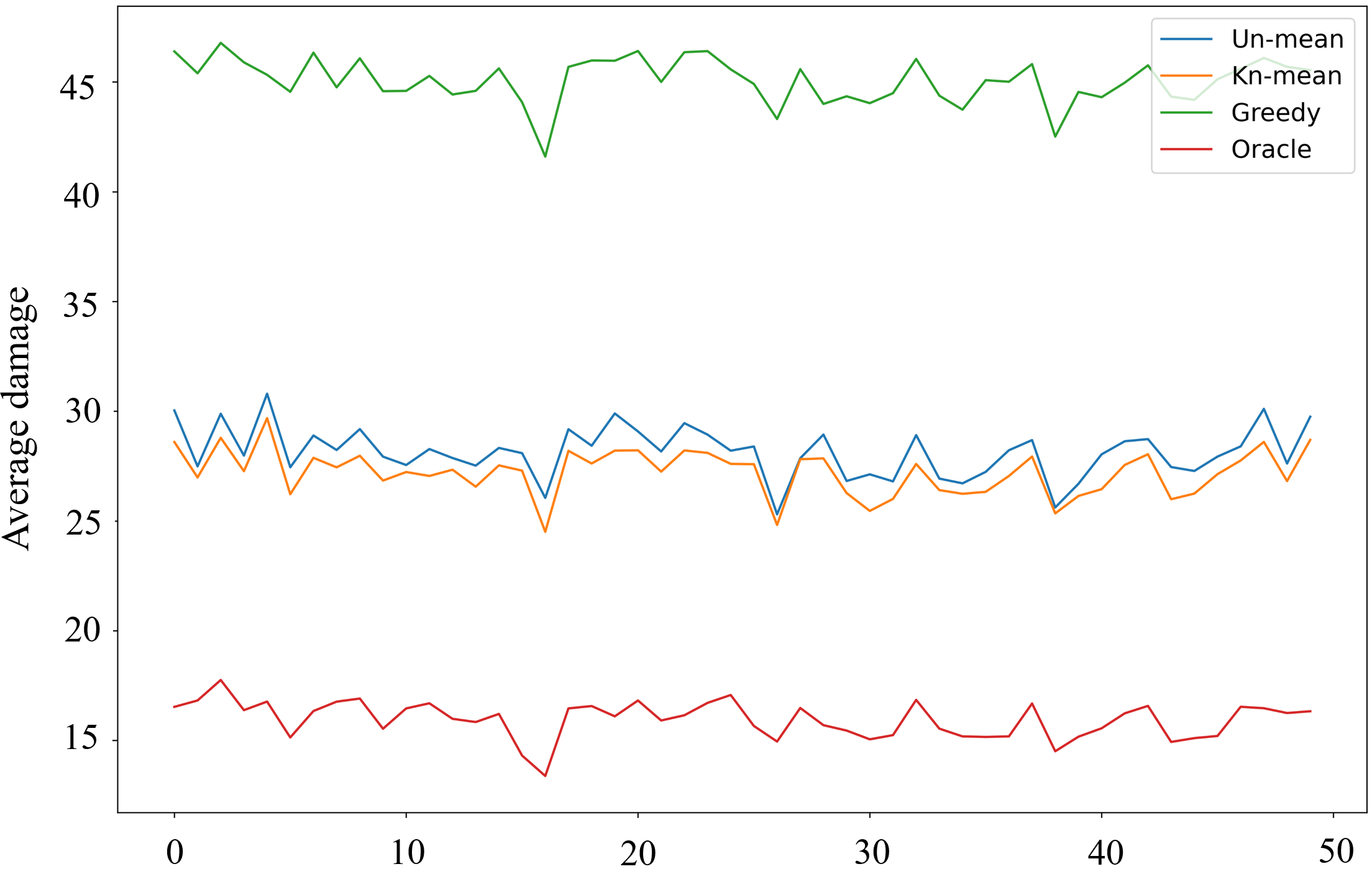}
        \caption*{Damage values for ($30, 80$)}
    \end{subfigure}
    \hfill
    \begin{subfigure}{0.24\textwidth}
        \centering
        \includegraphics[width=\linewidth]{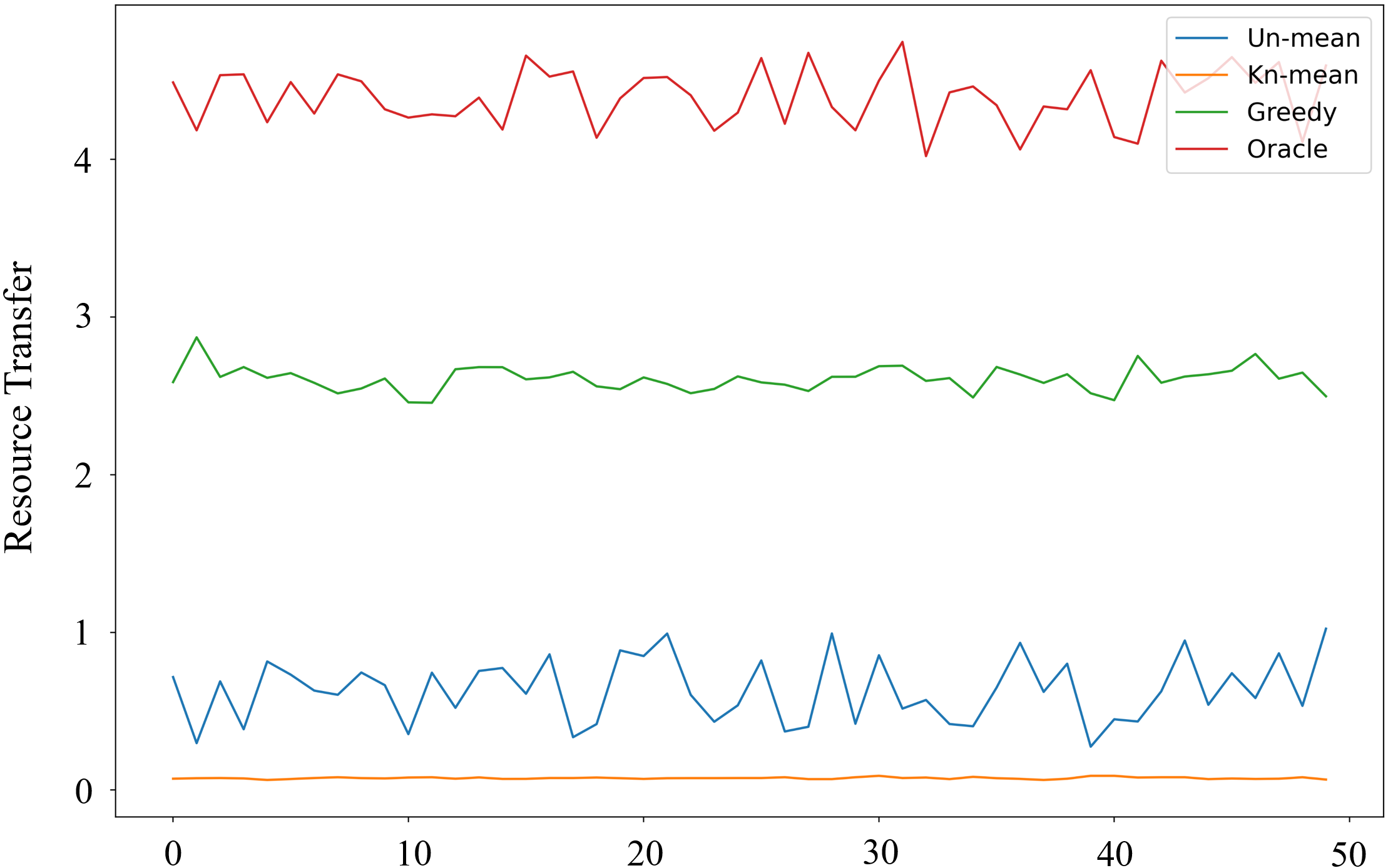}
        \caption*{Transfer cost for ($30, 80$)}
    \end{subfigure}

    \begin{subfigure}{0.24\textwidth}
        \centering
        \includegraphics[width=\linewidth]{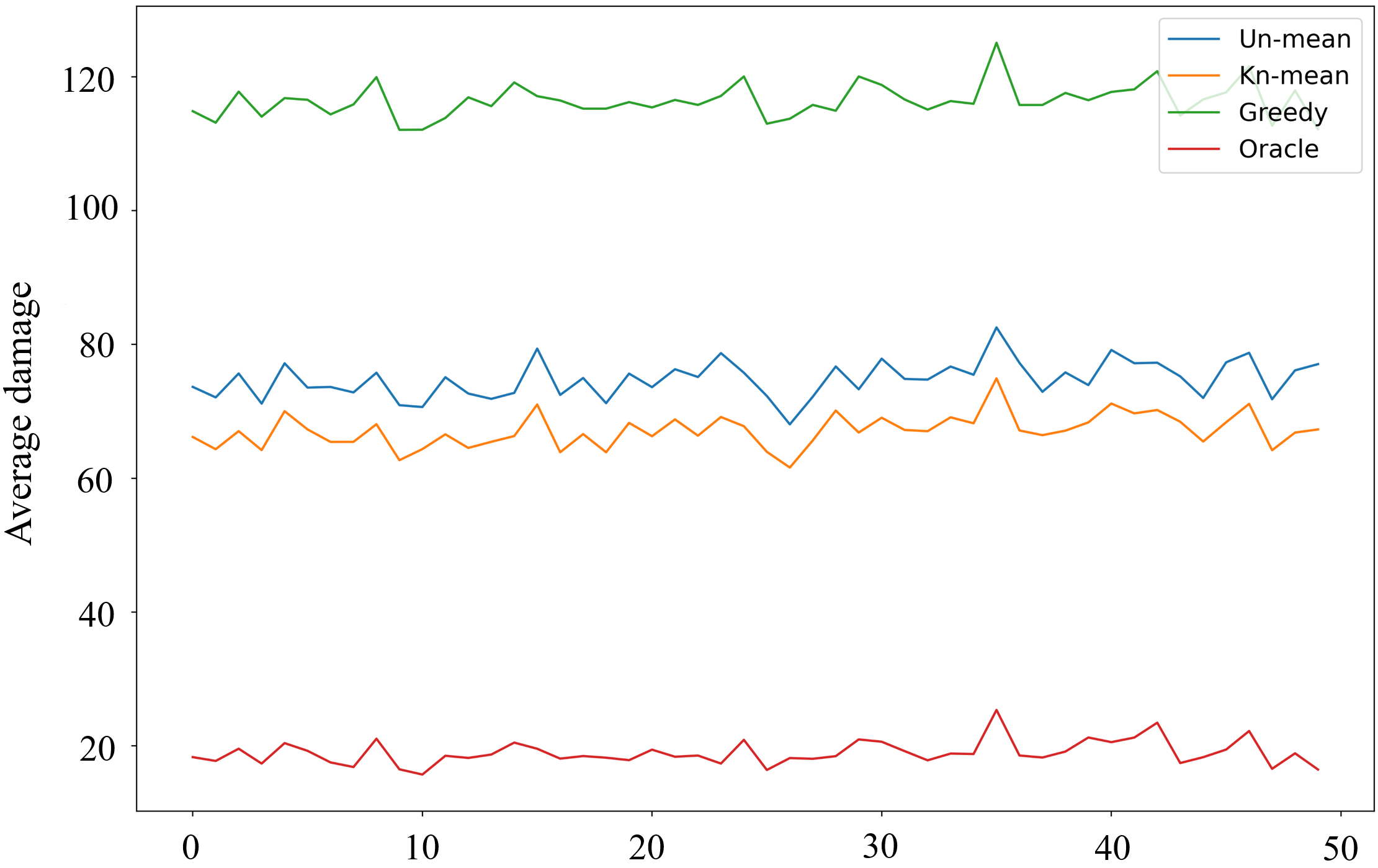}
        \caption*{Damage values for ($100, 20$)}
    \end{subfigure}
    \hfill
    \begin{subfigure}{0.24\textwidth}
        \centering
        \includegraphics[width=\linewidth]{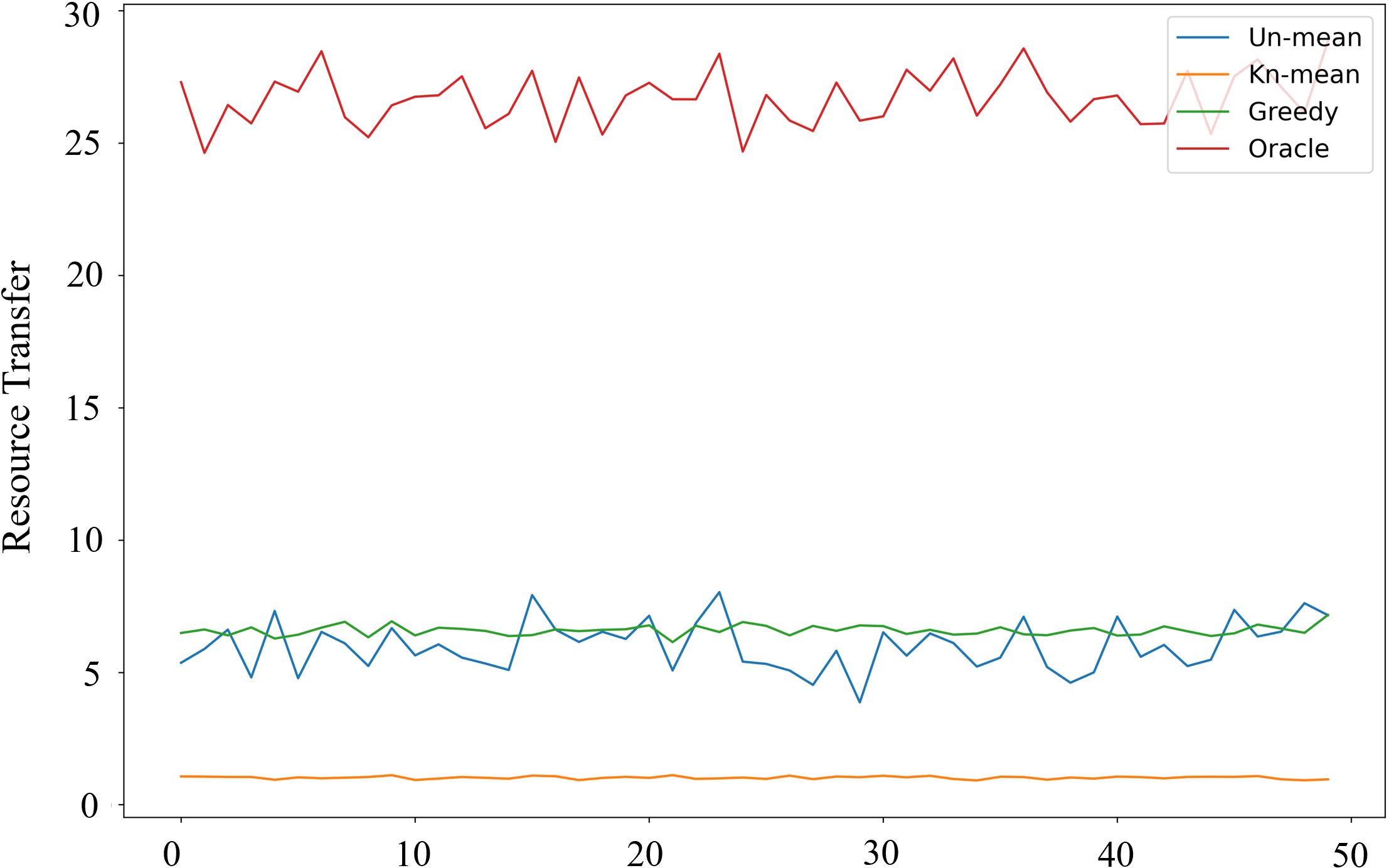}
        \caption*{Transfer cost for ($100, 20$)}
    \end{subfigure}
    \hfill
    \begin{subfigure}{0.24\textwidth}
        \centering
        \includegraphics[width=\linewidth]{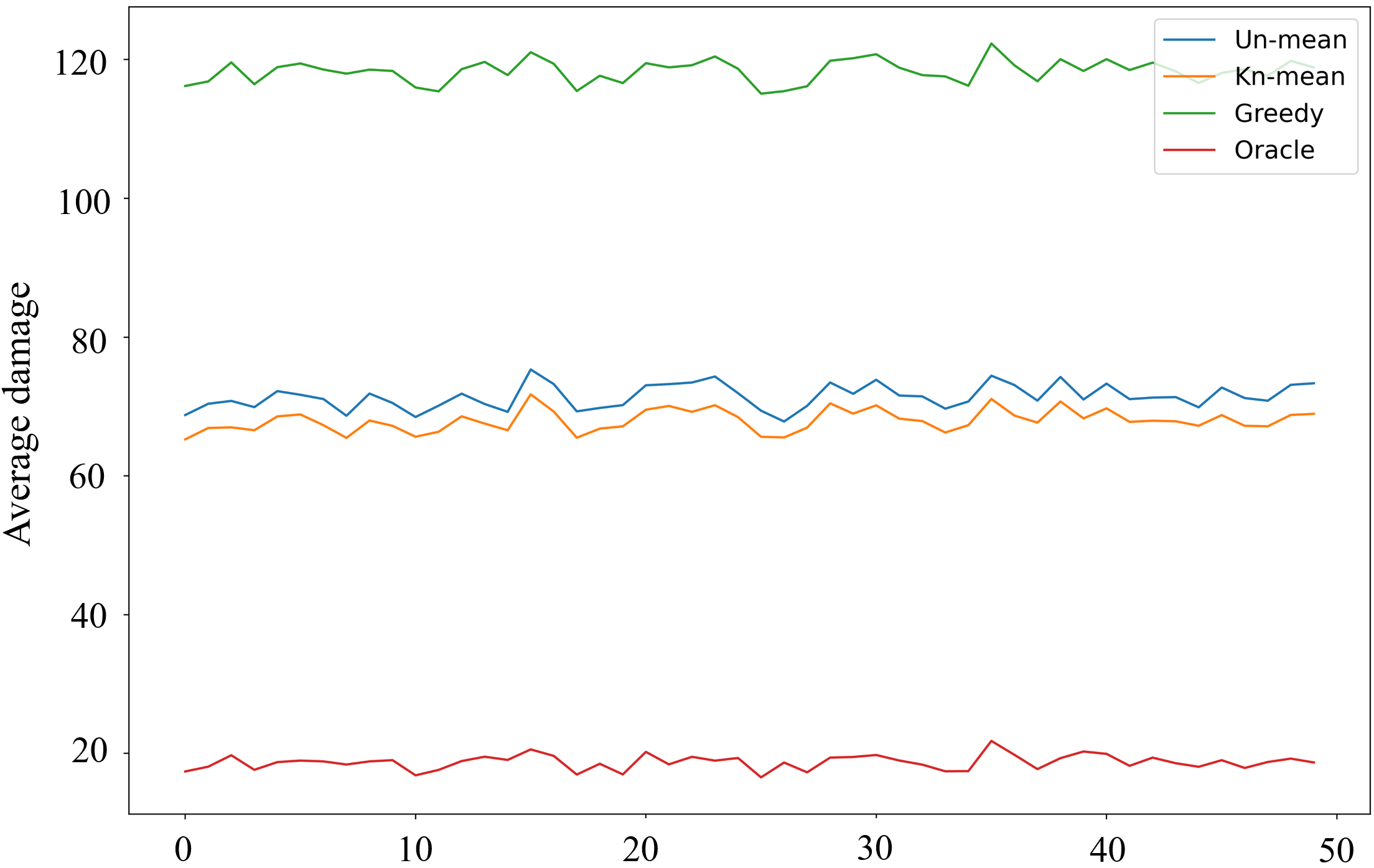}
        \caption*{Damage values for ($100, 80$)}
    \end{subfigure}
    \hfill
    \begin{subfigure}{0.24\textwidth}
        \centering
        \includegraphics[width=\linewidth]{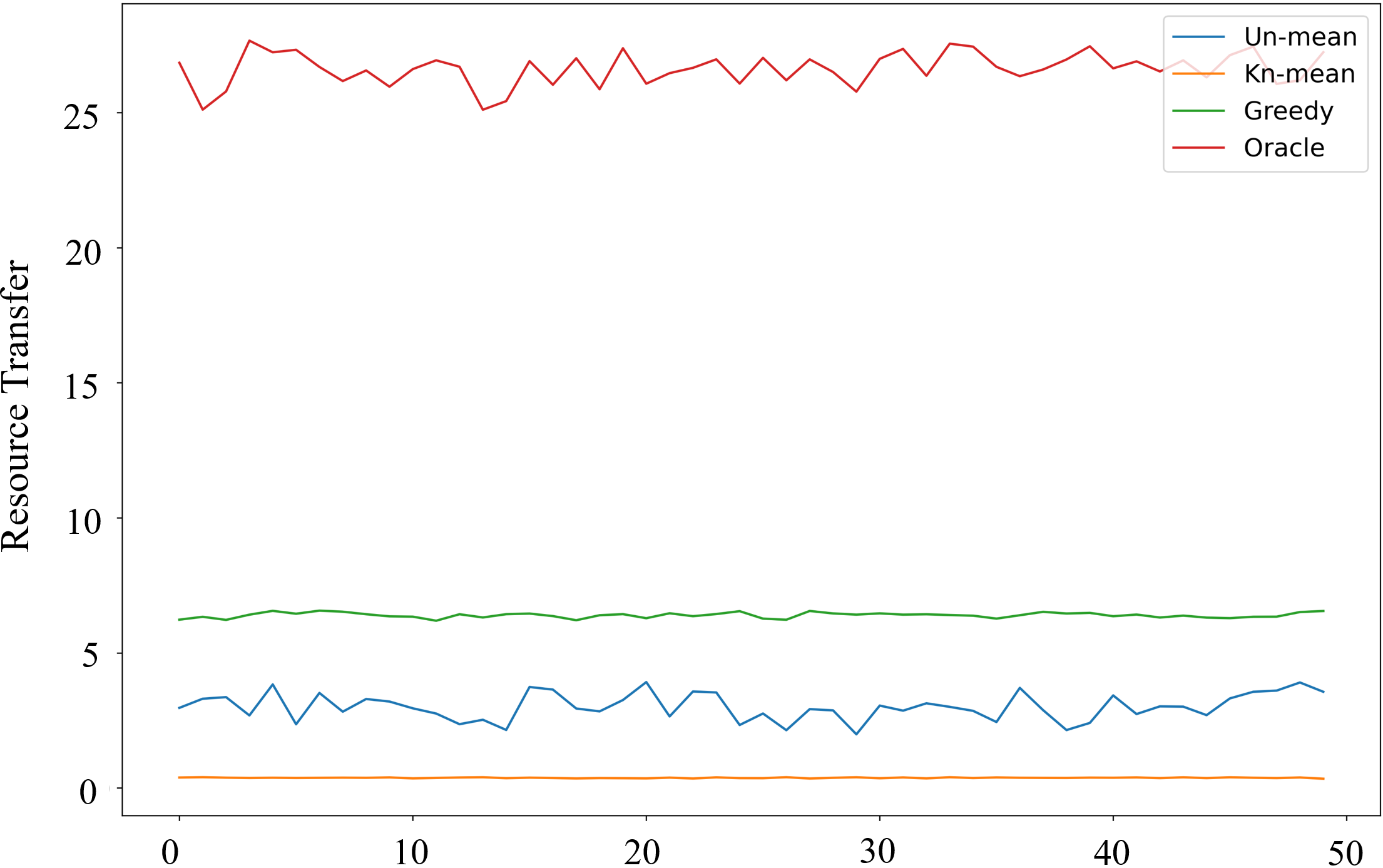}
        \caption*{Transfer cost for ($100, 80$)}
    \end{subfigure}

    \caption{Average damage and transfer cost obtained by four approaches Un-mean, Kn-mean, Greedy, and Oracle for 50 instances (50 different scenarios on horizontal axis) of RAPA, denoted by $(n,T)$.}
    \label{fig_comparison}
\end{figure*}

In addition to the visual comparison results, Table \ref{comparison_table} presents the damage value, transfer cost, and running time for each approach. Each result is based on the average of 30 random instances of RAPA, conducted under varying numbers of nodes, $n=25,50,100,200$, and time slots $T=20,40,80$. The Greedy approach is the fastest, followed by Oracle, Kn-mean, and Un-mean, in that order. Regarding the damage function, the Greedy approach demonstrates the worst performance, while the Oracle has the best performance, as expected. Un-mean and Kn-mean fall between these two extremes. Additionally, the damage value obtained by Un-mean approaches that of Kn-mean as the number of time slots, $T$, increases. Finally, in terms of transfer cost, Oracle exhibits the worst performance, whereas Kn-mean demonstrates the best performance, as expected.
\begin{table*}[!t]
\caption{The damage and transfer cost (\textit{tran} column), and running time (\textit{time} column) of Un-mean, Kn-mean, Greedy and Oracle approaches over 12 different settings of RAPA. The results are the average of 30 random scenarios.}
\label{comparison_table}
\centering
\begin{tabular}{clccccccccccccc}
\hline
                     &         &  & \multicolumn{3}{c}{\textit{T}=20} &  & \multicolumn{3}{c}{\textit{T}=40} &  & \multicolumn{3}{c}{\textit{T}=80} \\ \cline{4-6} \cline{8-10} \cline{12-14}
\textit{n}           &         &  & damage  & tran & time &  & damage  & tran & time &  & damage  & tran & time \\ \hline
\multirow{4}{*}{25}  & Un-mean &  & 17.78 & 1.84 & 0.40 &  & 19.69 & 0.83 & 0.79 &  & 13.35 & 0.49 & 1.96 \\
                     & Kn-mean &  & 14.64 & 0.37 & 0.42 &  & 18.24 & 0.13 & 0.59 &  & 12.68 & 0.07 & 1.90 \\
                     & Greedy  &  & 25.67 & 2.94 & 0.06 &  & 28.60 & 2.56 & 0.12 &  & 24.07 & 2.52 & 0.32 \\
                     & Oracle  &  & 3.73  & 5.70 & 0.24 &  & 7.20  & 5.53 & 0.45 &  & 3.83  & 5.16 & 1.14 \\ \hline
\multirow{4}{*}{50}  & Un-mean &  & 38.08 & 3.61 & 1.09 &  & 33.78 & 1.63 & 2.01 &  & 27.86 & 1.22 & 3.60 \\
                     & Kn-mean &  & 32.48 & 0.66 & 1.07 &  & 31.20 & 0.34 & 1.95 &  & 26.54 & 0.16 & 3.20 \\
                     & Greedy  &  & 55.98 & 4.15 & 0.34 &  & 56.21 & 4.22 & 0.70 &  & 47.74 & 3.82 & 1.32 \\
                     & Oracle  &  & 11.60 & 11.82 & 0.65 &  & 11.71 & 10.89 & 1.42 &  & 8.14  & 10.84 & 2.90 \\ \hline
\multirow{4}{*}{100} & Un-mean &  & 72.04 & 5.65 & 1.58 &  & 66.18 & 3.97 & 3.65 &  & 64.05 & 2.38 & 5.82 \\
                     & Kn-mean &  & 62.60 & 1.37 & 1.50 &  & 60.59 & 0.72 & 3.49 &  & 60.85 & 0.30 & 5.63 \\
                     & Greedy  &  & 106.42 & 6.00 & 0.86 &  & 106.51 & 6.13 & 1.48 &  & 106.08 & 5.20 & 2.30 \\
                     & Oracle  &  & 21.09  & 22.42 & 1.34 &  & 17.84  & 23.10 & 2.23 &  & 18.52  & 20.88 & 4.38 \\ \hline
\multirow{4}{*}{200} & Un-mean &  & 149.64 & 11.54 & 9.21 &  & 125.11 & 7.83 & 16.84 &  & 128.51 & 3.97 & 29.86 \\
                     & Kn-mean &  & 130.55 & 2.57  & 8.91 &  & 112.39 & 1.29 & 15.27 &  & 123.00 & 0.64 & 28.61 \\
                     & Greedy  &  & 220.05 & 8.55  & 5.06 &  & 199.54 & 8.41 & 9.14  &  & 218.48 & 7.92 & 11.27 \\
                     & Oracle  &  & 53.48  & 39.32 & 7.38 &  & 29.16  & 42.30 & 14.20 &  & 40.05  & 42.35 & 25.70 \\ \hline
\end{tabular}
\end{table*}
\subsection{Learning Ability of the Algorithm}
An essential requirement of our proposed solution is to learn the attack scenarios over time. To evaluate this feature, we compare it to the Kn-mean algorithm that knows the mean and variance in advance. Figure \ref{fig_learn} depicts the results for two different instances of RAPA, ($n=50, T=100$) and ($n=100, T=100$). It shows that the difference between the damage values from the two algorithms decreases over time and gradually converges to zero. Thus, the algorithm effectively approximates the true mean and variance of the attack scenarios with time. The convergence towards zero demonstrates the learning robustness.
\begin{figure}[!t]
    \centering
    \begin{subfigure}{0.48\linewidth}
        \centering
        \includegraphics[width=\linewidth]{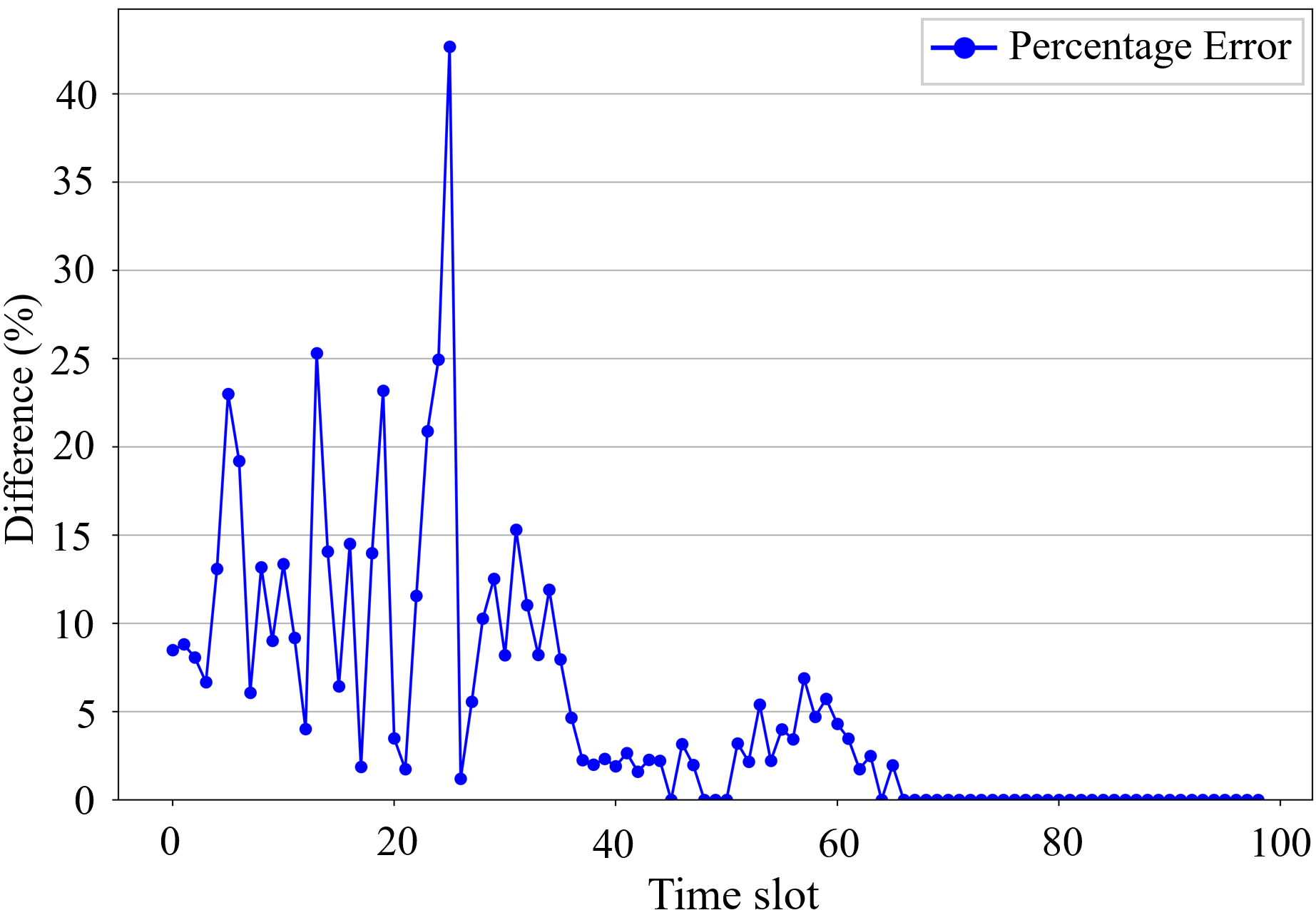}
    \end{subfigure}
    \hfill
    \begin{subfigure}{0.48\linewidth}
        \centering
        \includegraphics[width=\linewidth]{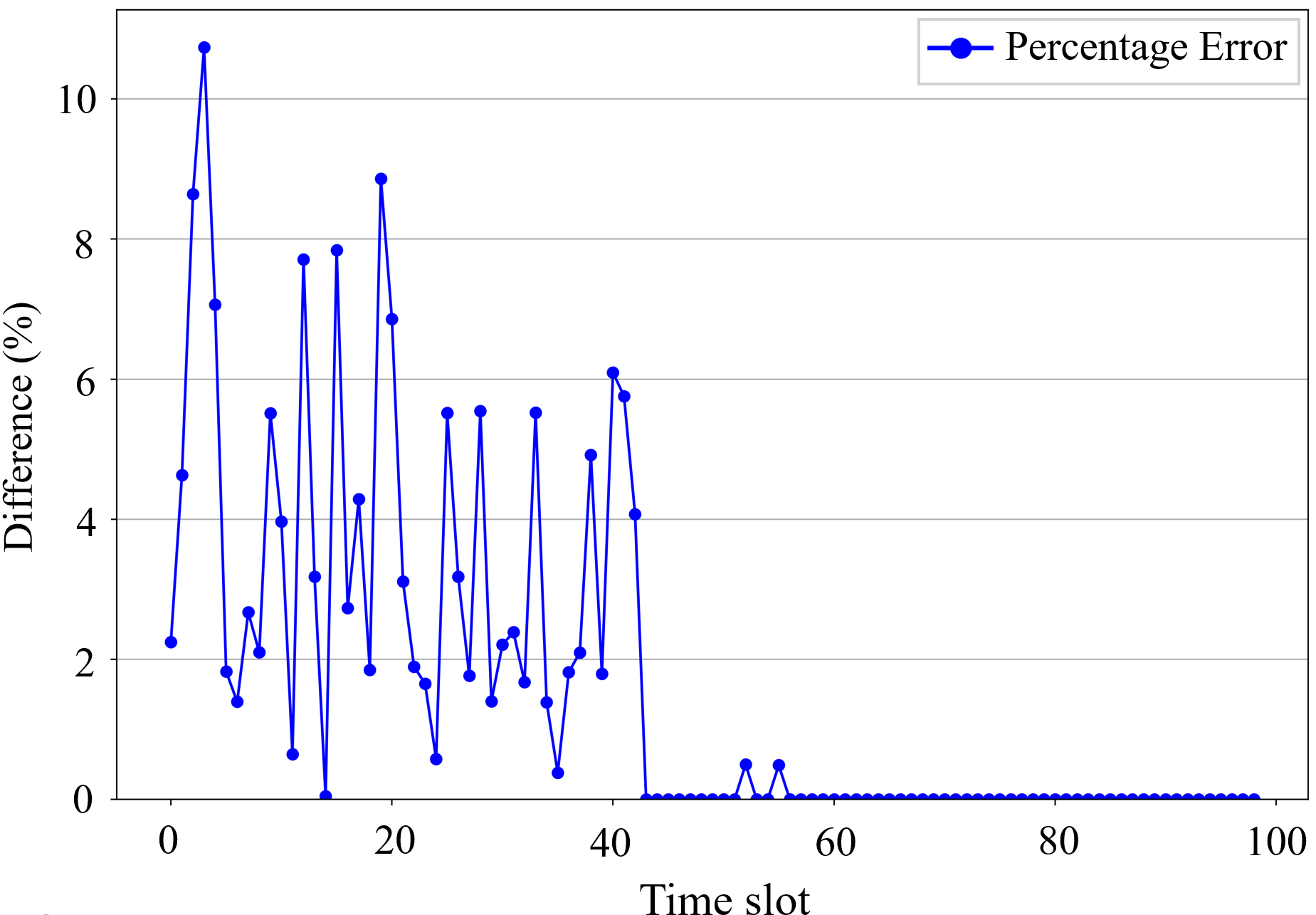}
    \end{subfigure}
    \caption{Learning ability of Un-mean algorithm. The diagram shows the difference between the obtained damage by Un-mean and Kn-mean algorithms over time for the instances ($50, 100$) (left panel), and ($100, 100$) (right panel).}
    \label{fig_learn}
\end{figure}

\subsection{Effect of Risk Parameter \(\alpha\)}

The parameter \(\alpha\) governs the chance‐constraint confidence \(1-\alpha\) in Phase~I, thus controlling risk tolerance.  To systematically evaluate its impact, we swept \(\alpha\) over the set \(\{0.01,\,0.05,\,0.10,\,0.15,\,0.20,\,0.25\}\) for two representative instances, \((n,T)=(30,20)\) and \((100,80)\).  Figure~\ref{fig_alpha} (left panels) shows how the damage objective \(Z_1\) decreases as \(\alpha\) increases—demonstrating the expected trade‐off from risk‐averse to risk‐seeking allocations.  The right panels plot the corresponding points in the \((Z_1,Z_2)\) space, with the subset of solutions satisfying Pareto‐optimality highlighted in red.

This explicit \(\alpha\)-parameter sweep confirms that:
\begin{itemize}
  \item Lower \(\alpha\) values (e.g.\ 0.01) produce conservative allocations with low damage but higher transfer costs.
  \item Higher \(\alpha\) values (e.g.\ 0.25) accept more damage in order to reduce transfer cost.
  \item Intermediate values trace a smooth trade‐off curve, enabling decision‐makers to select an allocation that matches their risk preference.
\end{itemize}

\begin{figure}[!t]
    \centering
    \begin{subfigure}{0.48\linewidth}
        \centering
        \includegraphics[width=\linewidth]{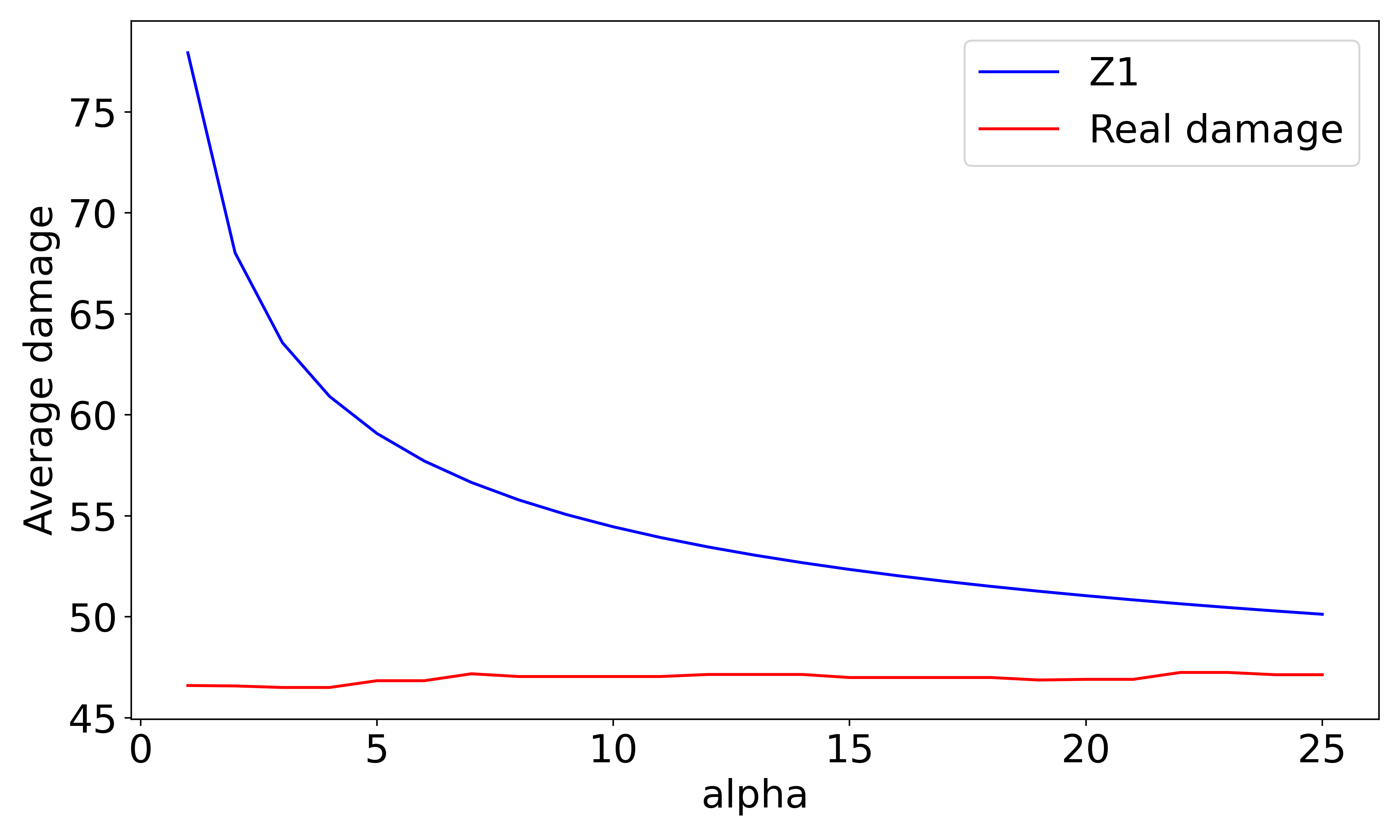}
        \caption*{$Z_1$ values and real damage values for instance ($30, 20$)}
    \end{subfigure}
    \hfill
    \begin{subfigure}{0.50\linewidth}
        \centering
        \includegraphics[width=\linewidth]{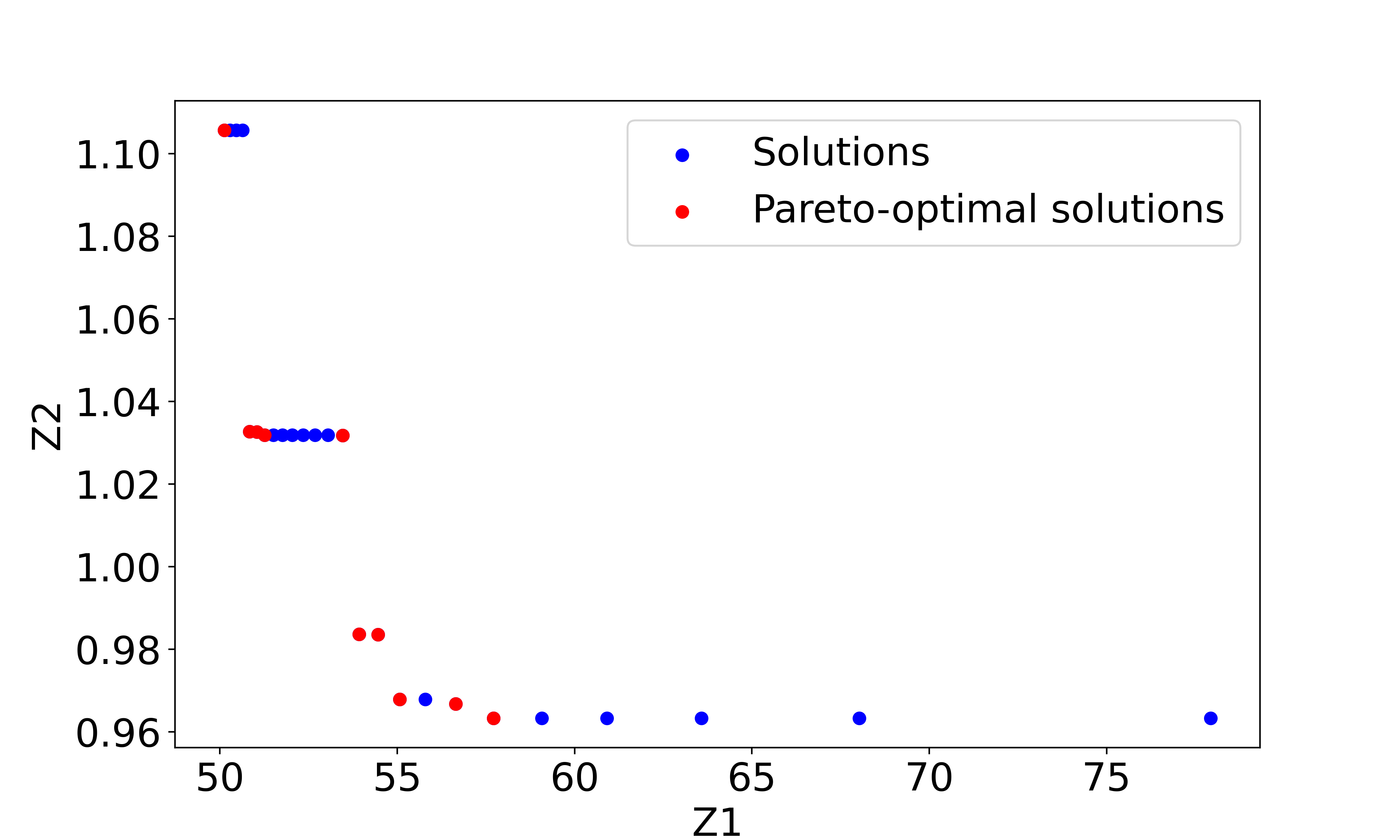}
        \caption*{Obtained solutions and Pareto-optimal solutions for ($30, 20$)}
    \end{subfigure}

    \begin{subfigure}{0.48\linewidth}
        \centering
        \includegraphics[width=\linewidth]{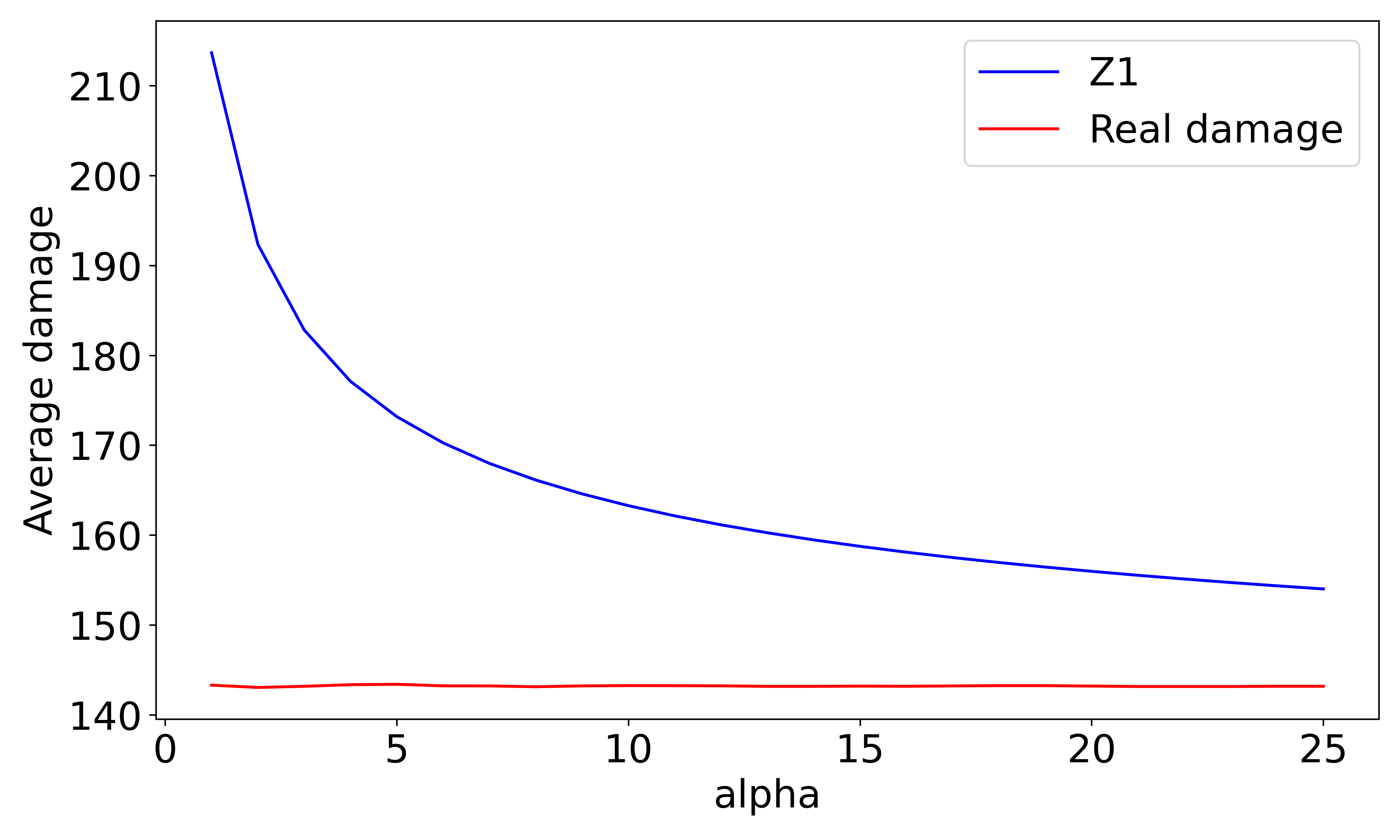}
        \caption*{$Z_1$ values and real damage values for instance ($100, 80$)}
    \end{subfigure}
    \hfill
    \begin{subfigure}{0.5\linewidth}
        \centering
        \includegraphics[width=\linewidth]{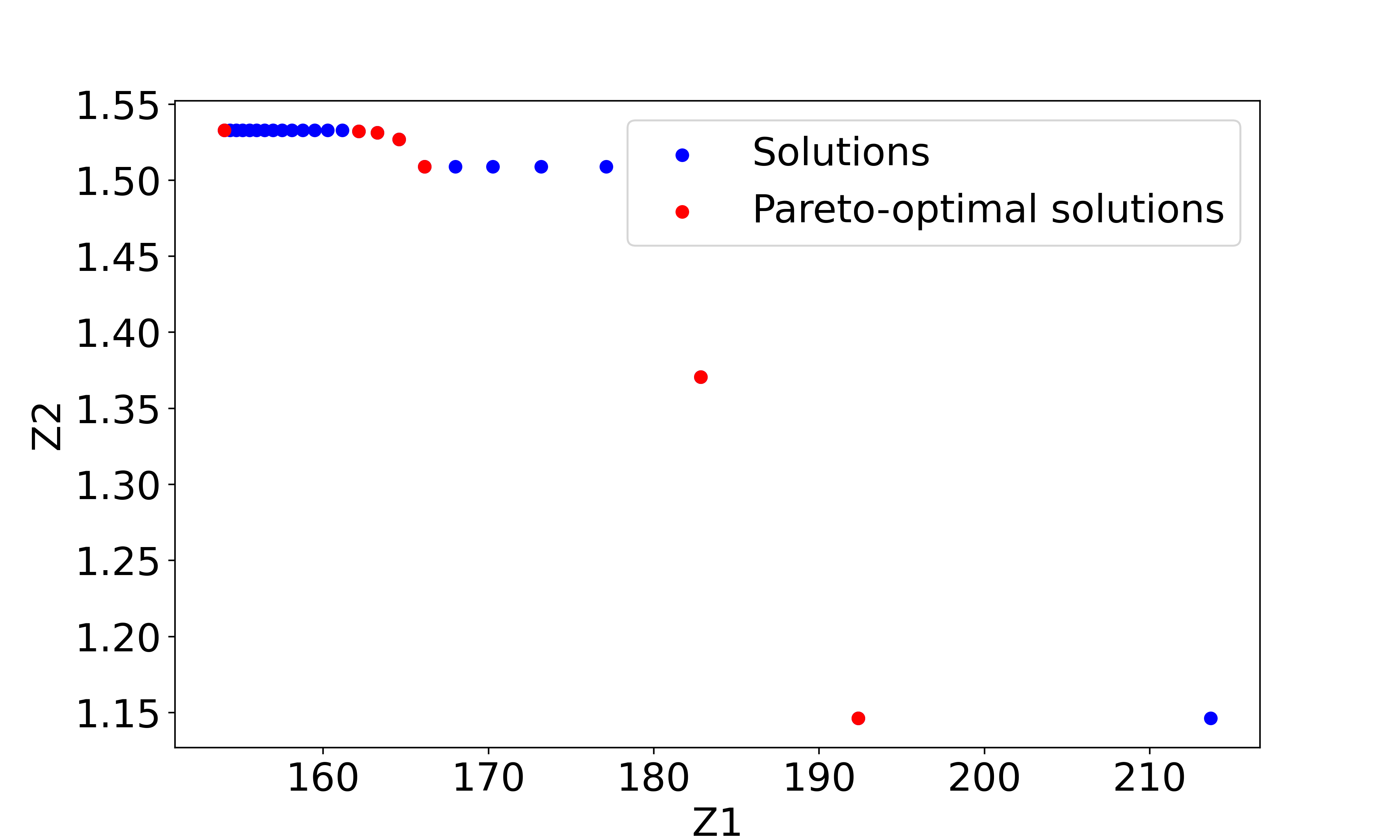}
        \caption*{Obtained solutions and Pareto-optimal solutions for ($100, 80$)}
    \end{subfigure}

    \caption{Solutions obtained for various values of the parameter $\alpha$ between 0.01 and 0.25. The left panels display the damage values corresponding to each $\alpha$, and the right panels illustrate the associated solutions in the objective space.}
    \label{fig_alpha}
\end{figure}
\subsection{Effect of Attack Probability}
To assess the impact of the attack probability on the system damage, we select it randomly in ($0,p$), where $p$ increases from $0$ to $1$. In each run, we compute the damage using our algorithm and the Oracle. Figure \ref{fig_attack} depicts the results for two instances ($n=20, T=20$) and ($n=100, T=80$).  Both diagrams show that the damage value rises with the attack probability. However, there is a notable difference between the two methods: For low attack probabilities, the Oracle achieves significantly lower damage values as it has a perfect knowledge of future attacks. For higher attack probabilities, the gap between the results obtained by the Un-mean algorithm and the Oracle diminishes. As the probability of attacks increases, more nodes are targeted in each time slot. Consequently, the Oracle must allocate resources across a broader set of nodes to mitigate potential damage, leading to damage values that are closer to those obtained by the Un-mean algorithm. 
\begin{figure}[!t]
    \centering
    \begin{subfigure}{0.48\linewidth}
        \centering
        \includegraphics[width=\linewidth]{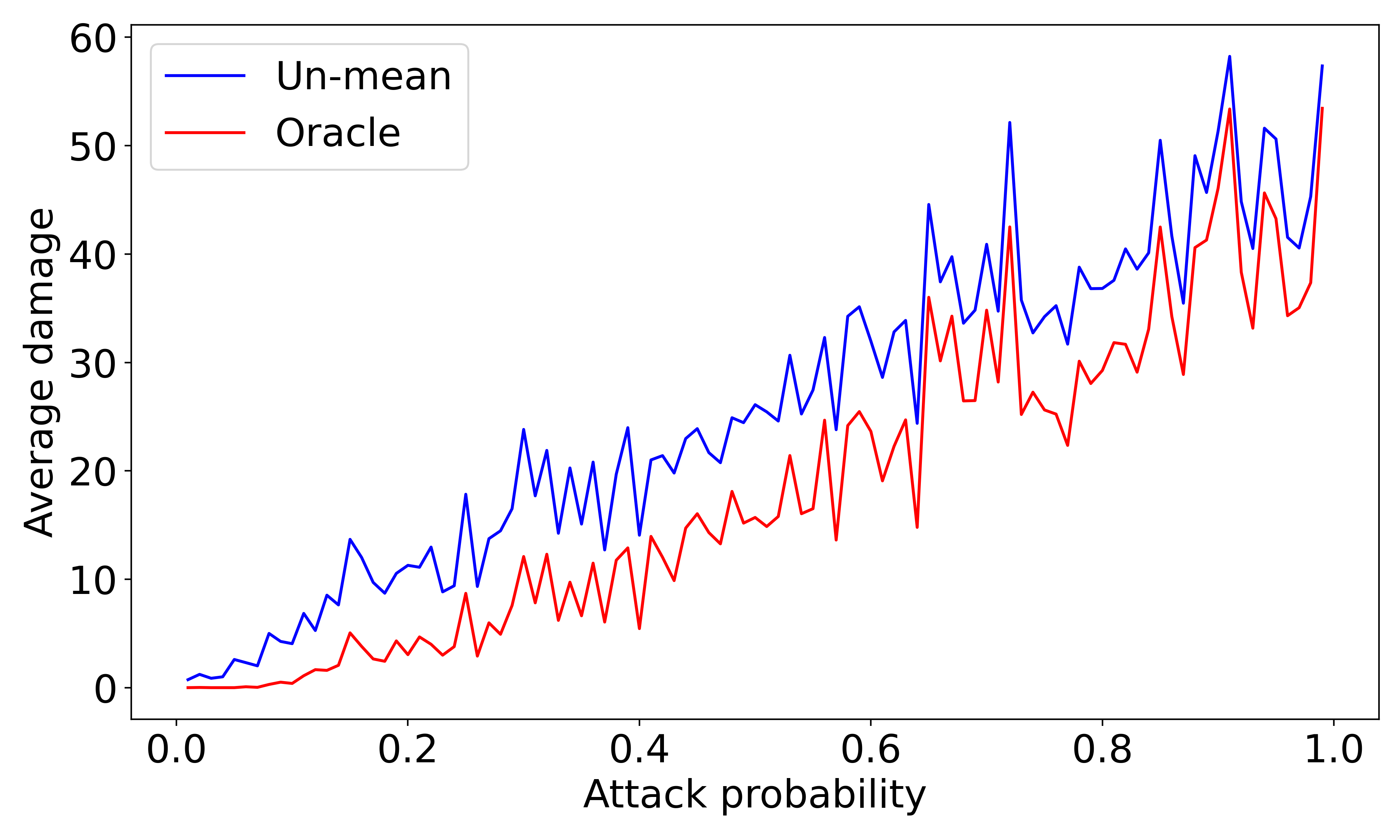}
        \caption*{Damage values for ($ 30,  20$)}
    \end{subfigure}
    \hfill
    \begin{subfigure}{0.48\linewidth}
        \centering
        \includegraphics[width=\linewidth]{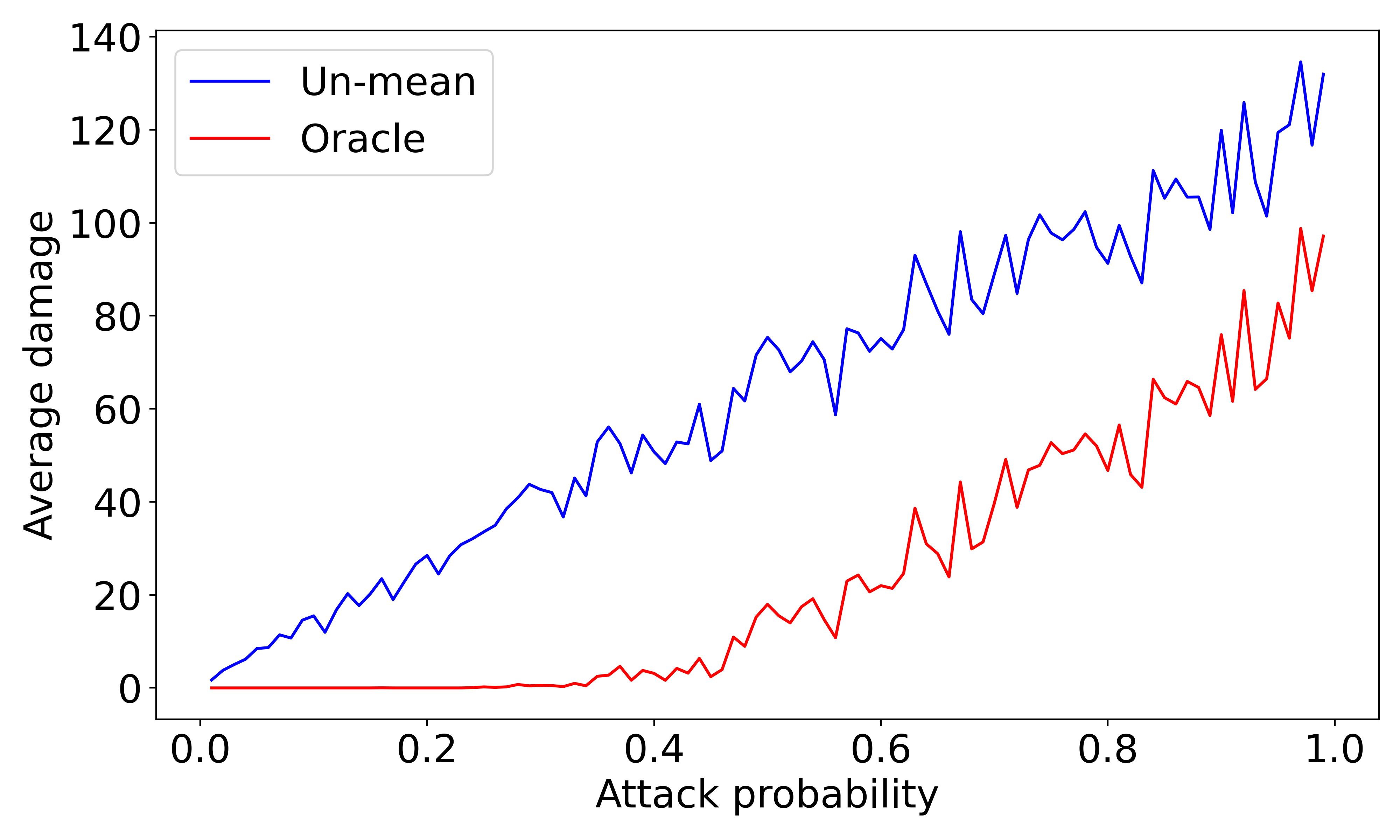}
        \caption*{Damage values for ($100, 80$)}
    \end{subfigure}
    \caption{Damage values obtained for various probabilities of attacks by Un-mean algorithm and Oracle under different attack scenarios.}
    \label{fig_attack}
\end{figure}
\subsection{Effect of Resource Availability}
We investigate the impact of the total resource availability, $R$, on the damage values. Within the constraints, $R$ ranged from $\sigma_i^n r_i^{\text{min}}$ to $\sigma_i^n r_i^{\text{max}}$. Values below this range result in no feasible solution, whereas those above result in perfect resource allocation and zero damage. We ran both the Un-mean algorithm and the Oracle for various resource values within this range and computed the corresponding damage values. Figure \ref{fig_resource} show the results for two different instance ($n=20, T=20$) and ($n=100, T=80$).

The damage values for both of the algorithms decrease with more resources. However, the two methods perform similarly for resource values close to the lower and upper bounds of the range. With scarce resources, both algorithms prioritize satisfying the minimum resource constraints for all nodes before distributing any remaining resources. That makes them vulnerable to attacks, leading to comparable damage values. With abundant resources, both algorithms fulfill the maximum required resources for all nodes, rendering them impervious to damage. For resource values in the middle range, particularly around the midpoint, the algorithms show the maximum disparity because Oracle has perfect knowledge about the nodes targeted in the next time slot. 
\begin{figure}[!t]
    \centering
    \begin{subfigure}{0.48\linewidth}
        \centering
        \includegraphics[width=\linewidth]{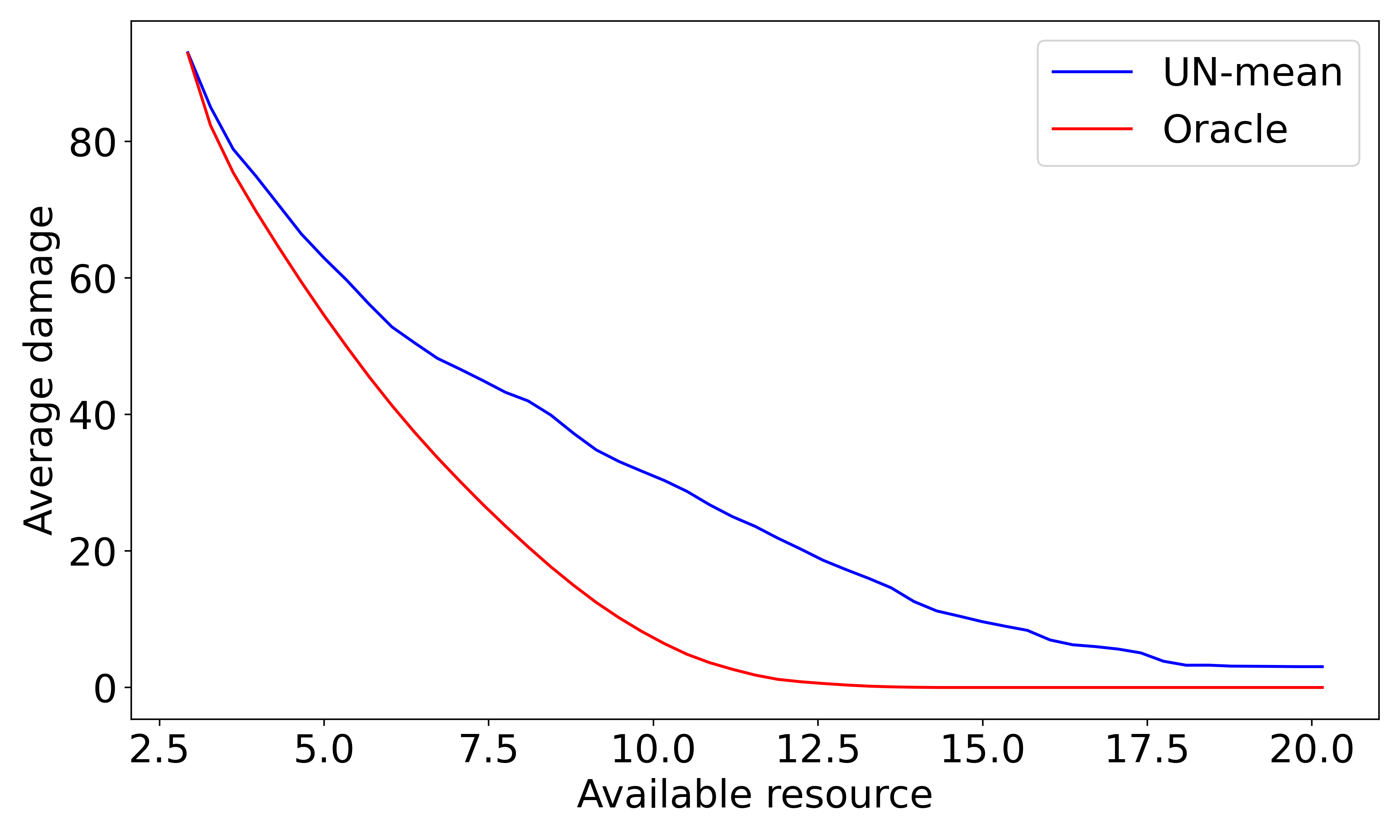}
        \caption*{Damage values for ($30,20$)}
    \end{subfigure}
    \hfill
    \begin{subfigure}{0.48\linewidth}
        \centering
        \includegraphics[width=\linewidth]{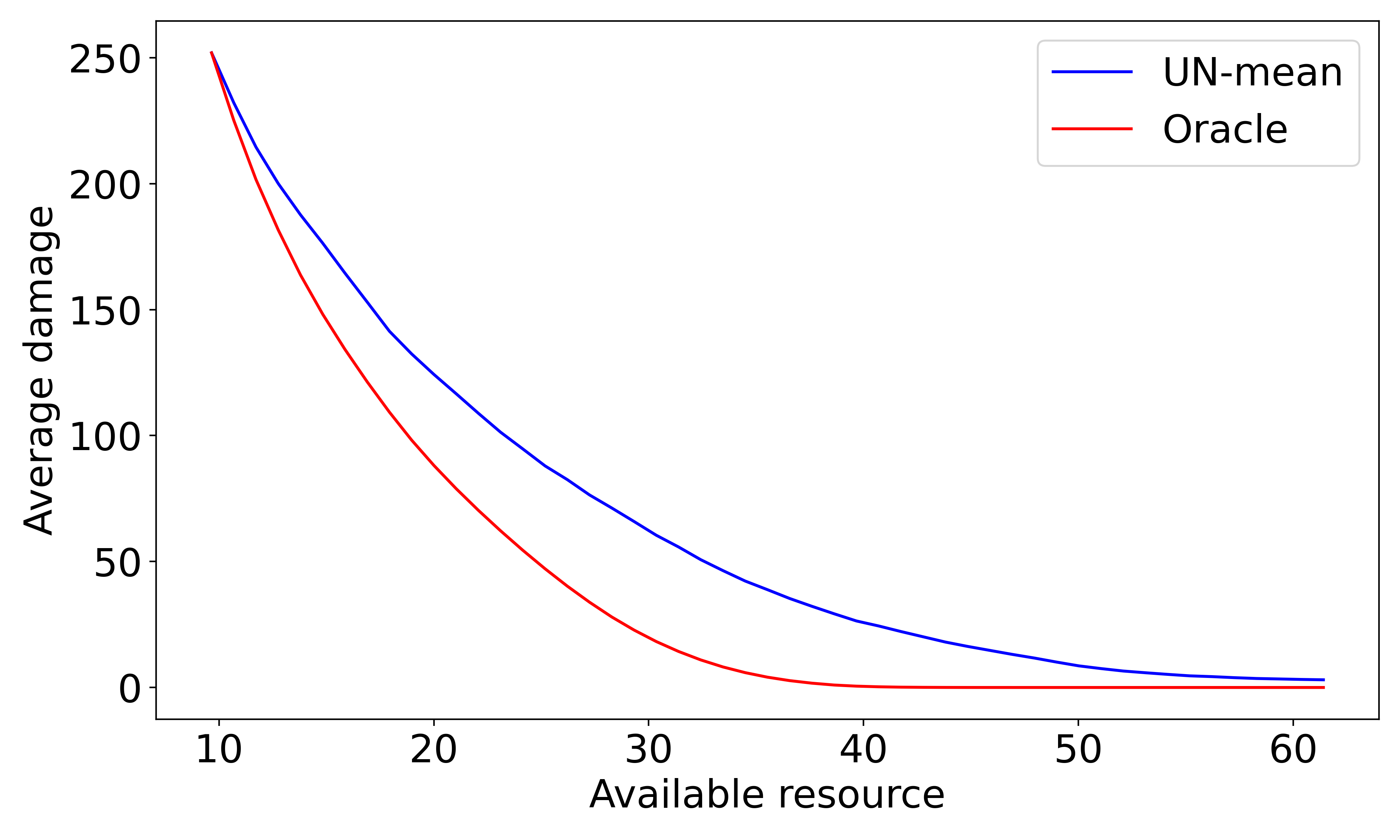}
        \caption*{Damage values for ($100, 80$)}
    \end{subfigure}

    \caption{Damage values obtained for various probabilities of attacks by Un-mean algorithm and Oracle under different resource availability.}
    \label{fig_resource}
\end{figure}
\section{Conclusion and Future work}
In this paper, we minimized the a priori unknown adversary's impact on the overall performance of a centralized system by optimal resource allocation while maintaining a low resource transfer cost. We formulated the problem within a bi-objective optimization framework and decomposed it into two components: Optimizing resource allocation and minimizing resource transfer among nodes. We presented quadratic chance-constrained and linear network flow programming models for the first and second components. Our methodology offered significant advantages in effective risk management, balancing the risk and performance objectives based on the decision-maker preferences, and enhancing model robustness against uncertainty. Future works include enabling distributed decision-making at the node level and including a strategic adversary that adapts to resource allocation decisions. 
\bibliographystyle{IEEEtran}

\begin{thebibliography}{10}
\providecommand{\url}[1]{#1}
\csname url@samestyle\endcsname
\providecommand{\newblock}{\relax}
\providecommand{\bibinfo}[2]{#2}
\providecommand{\BIBentrySTDinterwordspacing}{\spaceskip=0pt\relax}
\providecommand{\BIBentryALTinterwordstretchfactor}{4}
\providecommand{\BIBentryALTinterwordspacing}{\spaceskip=\fontdimen2\font plus
\BIBentryALTinterwordstretchfactor\fontdimen3\font minus \fontdimen4\font\relax}
\providecommand{\BIBforeignlanguage}[2]{{%
\expandafter\ifx\csname l@#1\endcsname\relax
\typeout{** WARNING: IEEEtran.bst: No hyphenation pattern has been}%
\typeout{** loaded for the language `#1'. Using the pattern for}%
\typeout{** the default language instead.}%
\else
\language=\csname l@#1\endcsname
\fi
#2}}
\providecommand{\BIBdecl}{\relax}
\BIBdecl

\bibitem{yang2013improving}
R.~Yang, C.~Kiekintveld, F.~Ord{\'o}nez, M.~Tambe, and R.~John, ``Improving resource allocation strategies against human adversaries in security games: An extended study,'' \emph{Artificial Intelligence}, vol. 195, pp. 440--469, 2013.

\bibitem{chang2021resource}
S.-L. Chang, K.-C. Lee, R.-R. Huang, and Y.-H. Liao, ``Resource-allocation mechanism: Game-theory analysis,'' \emph{Symmetry}, vol.~13, no.~5, p. 799, 2021.

\bibitem{wu2022dynamic}
W.~Wu, D.~Lin \emph{et~al.}, ``Dynamic resource allocation in an adversarial urban iobt environment,'' \emph{Wireless Communications and Mobile Computing}, vol. 2022, 2022.

\bibitem{du2022adversarial}
B.~Du, Z.~Huang, and C.~Wu, ``Adversarial deep learning for online resource allocation,'' \emph{ACM Transactions on Modeling and Performance Evaluation of Computing Systems}, vol.~6, no.~4, pp. 1--25, 2022.

\bibitem{alqerm2020enhanced}
I.~AlQerm and J.~Pan, ``Enhanced online q-learning scheme for resource allocation with maximum utility and fairness in edge-iot networks,'' \emph{IEEE Transactions on Network Science and Engineering}, vol.~7, no.~4, pp. 3074--3086, 2020.

\bibitem{cai2023deep}
Y.~Cai, P.~Cheng, Z.~Chen, M.~Ding, B.~Vucetic, and Y.~Li, ``Deep reinforcement learning for online resource allocation in network slicing,'' \emph{IEEE Transactions on Mobile Computing}, 2023.

\bibitem{lin2022heuristic}
D.~Lin and W.~Wu, ``Heuristic algorithm for resource allocation in an internet of battle things,'' \emph{IEEE Systems Journal}, vol.~17, no.~1, pp. 200--211, 2022.

\bibitem{lian2023distributed}
M.~Lian, Z.~Guo, S.~Wen, and T.~Huang, ``Distributed adaptive algorithm for resource allocation problem over weight-unbalanced graphs,'' \emph{IEEE Transactions on Network Science and Engineering}, 2023.

\bibitem{melouk2014stochastic}
S.~H. Melouk, B.~A. Fontem, E.~Waymire, and S.~Hall, ``Stochastic resource allocation using a predictor-based heuristic for optimization via simulation,'' \emph{Computers \& Operations Research}, vol.~46, pp. 38--48, 2014.

\bibitem{li2021solving}
J.~Li, B.~Xin, P.~M. Pardalos, and J.~Chen, ``Solving bi-objective uncertain stochastic resource allocation problems by the cvar-based risk measure and decomposition-based multi-objective evolutionary algorithms,'' \emph{Annals of Operations Research}, vol. 296, no.~1, pp. 639--666, 2021.

\bibitem{plamondon2006efficient}
P.~Plamondon, B.~Chaib-draa, and A.~R. Benaskeur, ``An efficient resource allocation approach in real-time stochastic environment,'' in \emph{Advances in Artificial Intelligence: 19th Conference of the Canadian Society for Computational Studies of Intelligence, Canadian AI 2006, Qu{\'e}bec City, Qu{\'e}bec, Canada, June 7-9, 2006. Proceedings 19}.\hskip 1em plus 0.5em minus 0.4em\relax Springer, 2006, pp. 49--60.

\bibitem{balseiro2023online}
S.~Balseiro, C.~Kroer, and R.~Kumar, ``Online resource allocation under horizon uncertainty,'' in \emph{Abstract Proceedings of the 2023 ACM SIGMETRICS International Conference on Measurement and Modeling of Computer Systems}, 2023, pp. 63--64.

\bibitem{yaylali2023robust}
G.~Yaylali and D.~Kalogerias, ``Robust and reliable stochastic resource allocation via tail waterfilling,'' in \emph{2023 IEEE 24th International Workshop on Signal Processing Advances in Wireless Communications (SPAWC)}.\hskip 1em plus 0.5em minus 0.4em\relax IEEE, 2023, pp. 256--260.

\bibitem{lotfi2012centralized}
F.~H. Lotfi, N.~Nematollahi, M.~H. Behzadi, M.~Mirbolouki, and Z.~Moghaddas, ``Centralized resource allocation with stochastic data,'' \emph{Journal of Computational and Applied Mathematics}, vol. 236, no.~7, pp. 1783--1788, 2012.

\bibitem{snyder2006facility}
L.~V. Snyder, ``Facility location under uncertainty: a review,'' \emph{IIE transactions}, vol.~38, no.~7, pp. 547--564, 2006.

\bibitem{snyder2005reliability}
L.~V. Snyder and M.~S. Daskin, ``Reliability models for facility location: the expected failure cost case,'' \emph{Transportation science}, vol.~39, no.~3, pp. 400--416, 2005.

\bibitem{bertsimas2011theory}
D.~Bertsimas, D.~B. Brown, and C.~Caramanis, ``Theory and applications of robust optimization,'' \emph{SIAM review}, vol.~53, no.~3, pp. 464--501, 2011.

\bibitem{ShapiroDentchevaRuszczynski2009}
A.~Shapiro, D.~Dentcheva, and A.~Ruszczy{\'n}ski, \emph{Lectures on Stochastic Programming: Modeling and Theory}.\hskip 1em plus 0.5em minus 0.4em\relax SIAM, 2009.

\bibitem{nikolova2010approximation}
E.~Nikolova, ``Approximation algorithms for reliable stochastic combinatorial optimization,'' in \emph{International Workshop on Randomization and Approximation Techniques in Computer Science}.\hskip 1em plus 0.5em minus 0.4em\relax Springer, 2010, pp. 338--351.

\bibitem{yang2017algorithm}
F.~Yang and N.~Chakraborty, ``Algorithm for optimal chance constrained linear assignment,'' in \emph{2017 IEEE International Conference on Robotics and Automation (ICRA)}.\hskip 1em plus 0.5em minus 0.4em\relax IEEE, 2017, pp. 801--808.

\bibitem{chung2008strong}
K.~Chung, ``The strong law of large numbers,'' \emph{Selected Works of Kai Lai Chung}, pp. 145--156, 2008.

\bibitem{gurobi2024}
\BIBentryALTinterwordspacing
{Gurobi Optimization, LLC}, \emph{Gurobi Optimizer Reference Manual}, 2024, version 11.01.1. [Online]. Available: \url{http://www.gurobi.com}
\BIBentrySTDinterwordspacing

\bibitem{SciPy-NMeth2020}
\BIBentryALTinterwordspacing
P.~Virtanen, R.~Gommers, T.~E. Oliphant, M.~Haberland, T.~Reddy, D.~Cournapeau, E.~Burovski, P.~Peterson, W.~Weckesser, J.~Bright, S.~J. van~der Walt, M.~Brett, J.~Wilson, K.~J. Millman, N.~Mayorov, A.~R.~J. Nelson, E.~Jones, R.~Kern, E.~Larson, C.~J. Carey, {\.I}.~Polat, Y.~Feng, E.~W. Moore, J.~VanderPlas, D.~Laxalde, J.~Perktold, R.~Cimrman, {\.I}.~Henriksen, E.~A. Quintero, C.~R. Harris, A.~M. Archibald, A.~H. Ribeiro, F.~Pedregosa, P.~van Mulbregt, and S.~. Contributors, ``{SciPy} 1.0: Fundamental algorithms for scientific computing in python,'' \emph{Nature Methods}, vol.~17, pp. 261--272, 2020. [Online]. Available: \url{https://doi.org/10.1038/s41592-019-0686-2}
\BIBentrySTDinterwordspacing

\bibitem{nesterov1994interior}
Y.~Nesterov and A.~Nemirovskii, \emph{Interior-point polynomial algorithms in convex programming}.\hskip 1em plus 0.5em minus 0.4em\relax SIAM, 1994.

\bibitem{edmonds1972theoretical}
J.~Edmonds and R.~M. Karp, ``Theoretical improvements in algorithmic efficiency for network flow problems,'' \emph{Journal of the ACM (JACM)}, vol.~19, no.~2, pp. 248--264, 1972.

\bibitem{dinic1970algorithm}
E.~A. Dinic, ``Algorithm for solution of a problem of maximum flow in networks with power estimation,'' in \emph{Soviet Math. Doklady}, vol.~11, 1970, pp. 1277--1280.

\bibitem{goldberg1988new}
A.~V. Goldberg and R.~E. Tarjan, ``A new approach to the maximum-flow problem,'' \emph{Journal of the ACM (JACM)}, vol.~35, no.~4, pp. 921--940, 1988.

\end{thebibliography}

\newpage
\section*{Appendix}
In this section, we provide a detailed description of each component of the proposed algorithm.

\subsection*{Chance-constrained Approach for Subproblem I}
We reformulate the first subproblem as a chance-constrained program, where we impose probabilistic constraints to manage the risk associated with uncertain outcomes. It improves the probability of meeting the constraints under uncertainty, making the solution robust in a probabilistic sense.
\begin{align}
\label{eq:CCP}
\text{Minimize} \quad & Z_1^t = \epsilon^t \nonumber \\
\text{Subject to:} \quad 
& \text{Pr}\left( \sum_{i=1}^{n} w_i \tilde{y}_i^t \frac{r_i^{\text{max}} - r_i^t}{r_i^{\text{max}} - r_i^{\text{min}}} \leq \epsilon^t \right) \geq 1 - \alpha, && \forall t \nonumber \\
& \sum_{i=1}^{n} r_i^t \leq R, && \forall t \nonumber \\
& r_i^{\text{min}} \leq r_i^t \leq r_i^{\text{max}}, && \forall i, \forall t \nonumber
\end{align}

where $\epsilon^t$ is an upper bound for the total damage in each time slot $t$, and $\alpha$ is the risk parameter. For example, $\alpha=0.05$ ensures that with at least 95\% confidence the system's damage does not exceed $\epsilon^t$, which is indeed the minimum possible damage for such a confidence level. In other words, incorporating the first constraint of \eqref{eq:CCP} enables us to quantify and manage the uncertainty in the system while optimizing for desired objectives. 

To solve \eqref{eq:CCP}, we write its probabilistic constraint, i.e., $\text{Pr}\left(\sum_{i=1}^{n} w_i \tilde{y}_i^t \frac{r_i^{\text{max}} - r_i^t}{r_i^{\text{max}} - r_i^{\text{min}}} \leq \epsilon^t\right) \geq 1-\alpha$, as the following deterministic constraint \cite{nikolova2010approximation,yang2017algorithm}. 
\begin{align}
\sum_{i=1}^{n} \mathbb{E}[\tilde{y}_i^t] v_i^t + \sqrt{\frac{1-\alpha}{\alpha}} \sqrt{ \sum_{i=1}^{n} \mathbb{V}[\tilde{y}_i^t] v_i^t} \leq \epsilon^t,
\end{align}
where $\mathbb{E}[\tilde{y}_i^t]$ and $\mathbb{V}[y_i^t]$ are the mean and variance of $\tilde{y}_i^t$, and $v_i^t = w_i \frac{r_i^{\text{max}} - r_i^t}{r_i^{\text{max}} - r_i^{\text{min}}}$ is a variable substitution for convenience. With a simple variable substitution of $h^2 = \sum_{i=1}^{n} \mathbb{V}[\tilde{y}_i^t] v_i^t$. So, we can write
\begin{align}
& \sum_{i=1}^{n} \mathbb{E}[\tilde{y}_i^t] v_i^t + \sqrt{\frac{1-\alpha}{\alpha}} h  \leq \epsilon,
\label{EQchance_n}
\end{align}
which renders Subproblem 1 a \textit{quadratic programming}.

As mentioned earlier, the adversary's attack distribution function and the exact values for $\mathbb{E}[\tilde{y}_i^t]$ and $\mathbb{V}[y_i^t]$ are not available. Consequently, we develop an estimation method that involves iterative solutions to Subproblem 1 over $t=1,2,\ldots,T$, followed by updates based on the observed attacks at each time slot.

\subsection*{Network Flow Approach for Subproblem II}
Knowing the allocated resource to each node $i$ at time $t-1$, i.e., $r_i^{t-1}$, Subproblem 2 optimizes the redistribution of resources among network nodes to ensure the optimal allocation for each node $i$ at time $t$, i.e., $(r_i^t)^*$, which is calculated by Subproblem 1. To this end, we convert subproblem 2 to a weighted network flow problem and solve it efficiently.

The weighted network flow problem involves determining the optimal flow of resources through a network while minimizing the total cost. To convert Subproblem 2 to a weighted network flow problem, we insert two artificial \textit{source} and \textit{sink} nodes, denoted by index $i=0$ and $i=n+1$, respectively. We set the cost for the edges such artificial edges equal to 0, i.e., $c_{0i}$ and $c_{i(n+1)}$ to $0$ for all $i\in \{1,2,...,n\}$. Also, we consider a sufficiently large capacity, e.g., $u_{ij}= R$, for a directed edge originating from node $i$ to node $j$. Additionally, we introduce an edge from the source node to a node $i$ if $(r_i^t)^* < (r_i^{t-1})^*$, with $u_{0i} = (r_i^{t-1})^* - (r_i^t)^*$. We use $A^t$ to represent these edges, i.e., $A^t=\{i\in \{1,2,...,n\}|(r_i^{t-1})^*>(r_i^t)^* \}$. Similarly, we establish an edge from a node $i$ to the sink node if ($r_i^{t-1})^* < (r_i^t)^*$ and assign $u_{i(n+1)} = (r_i^t)^* - (r_i^{t-1})^*$. We use $B^t$ to represent these edges, i.e., $B^t=\{i\in \{1,2,...,n\}|(r_i^{t-1})^* <  (r_i^t)^* \}$. Under the following initialization of the input parameters, we can formulate the problem as a linear program:\\
\textbf{Initialization of the added parameters:}
\begin{align*}
    & c_{0i} = 0, \quad \text{if} \quad i \in A^t \\
    & c_{i(n+1)} = 0, \quad \text{if} \quad i \in B^t\\
    & u_{ij} = R, \quad \forall i,j \in \{1,2,\ldots,n\}\\
    & u_{0i} = (r_i^{t-1})^* - (r_i^t)^*, \quad \text{if} \quad i \in A^t\\
    & u_{i(n+1)} = (r_i^t)^*-(r_i^{t-1})^*, \quad \text{if} \quad i \in B^t
\end{align*}
\textbf{Linear programming model for the resource transfer:}
\begin{align*}
    \text{Minimize} \quad & \sum_{i,j} c_{ij}  x_{ij}^t \\
    \text{Subject to:}\\
    & \sum_{i} x_{ij}^t - \sum_{i} x_{ji}^t = 0 , \quad \forall j \in \{1,2,\ldots,n\}\\
    & \sum_{i} x_{i(n+1)}^t = \sum_{i \in B^t} (r_i^t)^* - (r_i^{t-1})^*   \\
    & 0 \leq x_{ij}^t \leq u_{ij}, \quad \forall i,j \in \{0,1,\ldots,n+1\}
\end{align*}
The objective of the problem is to minimize the total cost of resource transfer. The first constraint ensures that the transferred resources into a node equals the resource out of that node (except for the source and sink nodes). The second constraint guarantees that the resource flow is sufficient for all nodes in $(B^t)$ to achieve their optimal resource levels in time slot $t$. The third constraint stipulates that the resources on each edge do not exceed its capacity, thereby guaranteeing the minimum requirement.
\subsection*{Algorithm for Solving RAPA}
The algorithm starts with the initial values $\mathbb{E}[\tilde{y}_i^1] = \frac{1}{n}$ and $\mathbb{V}[y_i^1] = 0$ for $i=1,2,\dots, n$. Also, we initialize resource allocation by the feasible solution $r_i^0 = r_i^{\text{min}} + \frac{w_i}{\sum_{j=0}^n w_j} (R-\sum_{j=1}^n r_j^{\text{min}})$. That implies that in the absence of any historical information about the attacks, we allocate resources based on the importance weights. In each round $t \geq 1$, we compute the optimal resource allocations $(r_i^{t})^*$ and then the optimal resource transfer $(x_{ij}^1)^{t}$ for $i,j \in \{1,2,\dots,n \}$. We apply the computed allocations, observe the attacks, and update the mean and variance of attacks, e.g., using a Bayesian strategy.
\begin{algorithm}[H]
\caption{Algorithm for RAPA}
\label{alg:appendix}
\begin{algorithmic}[1]
\STATE \textbf{Initialize} $\mathbb{E}[\tilde{y}_i^1]=\frac{1}{n}$, $\mathbb{V}[y_i^1]=0$ and $r_i^0 = r_i^{\text{min}} + \frac{w_i}{\sum_{j=1}^n w_j} (R-\sum_{j=1}^n r_j^{\text{min}})$ for $i=1,2,\dots,n$.
\FOR{$t=1$ to $T$}
    \STATE \textbf{Find} optimal resource allocation $(r_i^t)^*$ by solving:
    \STATE \quad \fbox{\parbox{0.92\linewidth}{%
    \begin{align*}
    & \text{Minimize} && Z_1^t = \epsilon^t \\
    & \text{Subject to:}  \\
    &&& \sum_{i=1}^{n} \mathbb{E}[\tilde{y}_i^t] v_i^t + \sqrt{\frac{1-\alpha}{\alpha}} h \leq \epsilon^t,  \\
    &&& h^2 = \sum_{i=1}^{n} \mathbb{V}[\tilde{y}_i^t] v_i^t, \\
    &&&  v_i^t = w_i \frac{r_i^{\text{max}} - r_i^t}{r_i^{\text{max}} - r_i^{\text{min}}}, \quad \forall i \\
    &&& \sum_{i=1}^{n} r_i^t \leq R, \\ 
    &&& r_i^{\text{min}} \leq r_i^t \leq r_i^{\text{max}}, \quad \forall i 
    \end{align*}
    }}
    \STATE \textbf{Find} optimal resource transfer $(x_{ij}^t)^*$ by solving:
    \STATE \quad \fbox{\parbox{0.92\linewidth}{%
    \begin{align*}
    \text{Minimize} &\quad Z_2^t = \sum_{i,j} c_{ij}  x_{ij}^t \\
    \text{Subject to:}\\
    & \sum_{i} x_{ij}^t - \sum_{i} x_{ji}^t = 0 , \quad \forall i \\
    & \sum_{i} x_{i(n+1)}^t = \sum_{i \in B^t} (r_i^t)^* - (r_i^{t-1})^*, \forall i   \\
    & 0 \leq x_{ij}^t \leq u_{ij},  \forall i,j \in \{0,\ldots,n+1\}
    \end{align*}
    }}
    \STATE \textbf{Apply} the obtained resource allocation, observe the actual attack scenario,  and update the mean and variance of attacks for all nodes $i=1,2,\dots, n$:
    \begin{align*}
    & \mathbb{E}[\tilde{y}_i^{t+1}]=\frac{t}{t+1} \mathbb{E}[\tilde{y}_i^{t}] + \frac{\text{Observed} ({y}_i^{t})} {t+1} \\
    & \mathbb{V}[\tilde{y}_i^{t+1}] = \frac{t}{t+1} \mathbb{V}[\tilde{y}_i^t] + \frac{(\text{Observed} ({y}_i^{t}) - \mathbb{E}[\tilde{y}_i^{t+1}])^2} {t+1}
    \end{align*}
\ENDFOR
\STATE \textbf{return} $(r_i^t)^*$ and $(x_{ij}^t)^*$ for all $t \in \{1,2,\ldots,T \}$ and $i,j \in \{1,2,\ldots,n\}$.
\end{algorithmic}
\end{algorithm}

\end{document}